%% file: main.tex
\documentclass[review, 11pt, a4paper, table]{elsarticle}

\usepackage{amsmath}
\usepackage{amsfonts}

\usepackage{amsthm}

\newtheorem{proposition}{Proposition}

\usepackage{bm}
\usepackage{graphicx,psfrag,epsf}
\usepackage{natbib}
\setcitestyle{authoryear,open={(},close={)}}
\usepackage[hidelinks]{hyperref}
\usepackage{enumerate}
\usepackage{doi}
\usepackage[bottom,hang,flushmargin]{footmisc}
\usepackage{booktabs}
\usepackage{url}
\usepackage[left=0.8in, right=0.8in, bottom = 1in, top = 1in]{geometry}
\usepackage{setspace}
\usepackage{tikz}
\usepackage{makecell}
\usetikzlibrary{positioning,arrows.meta,backgrounds}

\usepackage{caption}
\definecolor{thiefblue}{RGB}{31,99,166}
\usepackage{array}
\usepackage{float}
\usepackage{multirow}
\usepackage{enumitem}

\newif\ifblind
\blindfalse

  \newcommand{\pkgreco}{\texttt{FoReco} \citep{Girolimetto2026-hv}}
  \newcommand{\pkgcomb}{\texttt{FoCo2} \citep{Girolimetto2025-sf}}
  \newcommand{\blindnote}{}

\makeatletter
\def\ps@pprintTitle{}
\makeatother

\begin{document}

\begin{frontmatter}

\title{{\bfseries Thick as THieFs: Temporal coherent forecast combination for day-ahead electricity prices}}

\ifblind
\else
  \author{\vspace*{-1em}Daniele Girolimetto\corref{mycorrespondingauthor}}
  \cortext[mycorrespondingauthor]{Corresponding author}
  \ead{daniele.girolimetto@unipd.it}

  \address{Department of Statistical Sciences, University of Padua, Padova 35121, Italy \vspace*{-3.5em}}
\fi

\begin{abstract}
Day-ahead electricity prices are forecast by many competing models and at several temporal granularities, from hourly prices to block and baseload products. The resulting forecasts suffer from two distinct problems: forecasts produced at different granularities are incoherent, as the aggregates do not match the averages of their components, and no single model is the most accurate in every market, period and level. We develop a temporal coherent combination approach that addresses both problems in one step, pooling the forecasts that the competing experts produce at all the levels of a temporal hierarchy into a single forecast that satisfies the aggregation constraints and has minimum error variance among the linear unbiased combinations. Since the optimal solution depends on a high-dimensional error covariance matrix, we compare different estimators obtained by crossing correlation structures with linear and nonlinear shrinkage approaches. Using the base forecasts of four model classes for the German and Spanish day-ahead markets, the combined forecasts significantly outperform both the base forecasts and the best reconciled expert at nearly every temporal level.
\end{abstract}

\begin{keyword}
	Coherent forecasts; Forecast combination; Forecast reconciliation; Electricity prices
\end{keyword}
\end{frontmatter}

\section{Introduction} \label{sec:intro}

Forecasting day-ahead electricity prices is one of the studied problems in energy analytics. Two decades of research have produced a remarkable variety of modeling approaches, from parsimonious autoregressions with exogenous variables to neural networks, gradient-boosted trees and, most recently, pretrained foundation models \citep{Weron2014-gs, Lago2021-dp, Lipiecki2026-nm}. However, the ranking of competing models changes with the market, the period and the evaluation criterion, and the forecaster who commits to a single model bets on the stability of a ranking that is known to be unstable \citep{Lago2021-dp, Petropoulos2022-ji}.

Day-ahead markets add a structural dimension to this model uncertainty. Although the auction is settled hour by hour, trading and hedging also involve block products spanning several hours and the baseload contract covering the whole day, whose prices are simple averages of the hourly ones. The same quantity is thus observed, and must be predicted, at several temporal frequencies linked by exact aggregation constraints. Models estimated separately at each frequency ignore these constraints and produce incoherent forecasts: the average of the predicted hourly prices differs from the predicted baseload price, and decisions taken at different granularities are consequently misaligned \citep{Athanasopoulos2017-zh, Kourentzes2016-zv}.

Temporal hierarchies \citep{Athanasopoulos2017-zh} turn this difficulty into an opportunity. All the frequencies obtained by non-overlapping aggregation are collected into a single hierarchy, forecast independently, and reconciled through a least-squares adjustment \citep{Stone1942-pn, Byron1978-zb, Byron1979-pv}, possibly optimized by trace minimization as in \citet{Wickramasuriya2019-hs}. Reconciled forecasts are coherent by construction and, since each frequency emphasizes different features of the series, they are typically more accurate and more robust to model misspecification than the original ones \citep{Jeon2019-xb, Spiliotis2019-ll, Nystrup2020-ey, Nystrup2021-px}. After successful applications to electric load and renewable generation \citep{Di_Fonzo2023-ae, Panagiotelis2023-se,  Abolghasemi2025-fp, Bisaglia2025-gg, Calderon-Gonzalez2026-fi}, \citet{Lipiecki2026-nm} have recently shown that temporal hierarchy forecasting (THieF) also pays in electricity price forecasting, improving four model classes of very different complexity at every temporal level in the German and Spanish markets. For a comprehensive review of forecast reconciliation, see \citet{Athanasopoulos2024-cn}.

Despite their empirical success, current reconciliation approaches treat each forecasting model independently, exploiting the information available across temporal aggregation levels while leaving unresolved the choice among competing reconciled forecasts. 
The forecast combination literature, starting with the seminal contribution of \citet{Bates1969} and comprehensively reviewed by \citet{Timmermann2006} and \citet{Wang2023-bq}, consistently shows that combining forecasts is generally preferable to selecting a single model. This perspective is closely related to forecast reconciliation, which can itself be interpreted as a forecast combination procedure acting across the different levels of a hierarchy \citep{Hollyman2021-cy, Di_Fonzo2024-ym}. When multiple models produce forecasts for an entire temporal hierarchy, information is available along two distinct dimensions: across temporal aggregation levels and across forecasting models. Exploiting only one of these dimensions at a time is therefore unlikely to make full use of the available information

In this paper we treat the two directions jointly. Building on the coherent forecast combination framework developed by \citet{Girolimetto2024-qe} for cross-sectionally constrained time series, we derive the combination of the forecasts of $Q\geq 1$ experts on all the levels of a temporal hierarchy that is coherent with the aggregation constraints and has minimum error variance among the linear unbiased ones. We use the term expert broadly to denote any source of base forecasts for one or more components of the temporal hierarchy, such as a statistical model, a machine learning algorithm, a forecasting service, an institutional projection, or an individual human forecaster. The solution admits a closed-form expression and can be written in two equivalent representations, namely a structural form and a zero-constrained form (\autoref{thm:occ}). Moreover, it recovers the single-expert temporal reconciliation of \citet{Athanasopoulos2017-zh} as the special case ($Q = 1$). Its practical implementation depends on estimating the covariance matrix of the stacked base forecast errors, whose dimension increases with both the size of the hierarchy and the number of experts. To address this challenge, we compare different estimation strategies that includes four increasingly sparse correlation structures with linear and nonlinear shrinkage approaches \citep{ledoit2004, Schafer2005-jh, ledoit2020analytical, ledoit2022power}.

For the empirical analysis we consider the forecasting experiment of \citet{Lipiecki2026-nm} for the German EPEX-DE and Spanish OMIE electricity markets over the 2021–2024 test period using the \textsf{R} packages \pkgreco\blindnote{} and \pkgcomb{} to reconcile and combine the base forecasts obtained by four different experts. Temporal coherent combination outperforms the best reconciled individual model at almost every aggregation level in both markets, reducing hourly MAE by 4\% to 24\% relative to the base forecasts, depending on the expert, with gains becoming larger at coarser temporal aggregations. The robustness analysis further provides clear practical guidance. The choice of correlation structure proves to be the key modelling decision: preserving each expert’s error dependence across temporal levels is substantially more important than the particular linear or nonlinear shrinkage method adopted. Finally, the gains are not driven by any single expert, and combining and reconciling in one simultaneous step outperforms the sequential strategy that first averages the experts and then reconciles the result.

The remainder of the paper is organized as follows. Section~\ref{sec:cfc} sets up temporal hierarchies and the multiple base forecasts, derives the optimal temporal coherent combination and discusses the estimation of the error covariance matrix. Section~\ref{sec:exp} describes the data and the experimental design and presents the main results. Section~\ref{sec:robust} collects the robustness analysis and Section~\ref{sec:conclusion} concludes. 

\section{Coherent forecast combination with temporal hierarchies} \label{sec:cfc}
\subsection{Temporal hierarchies} \label{sec:th}

A temporal hierarchy \citep{Athanasopoulos2017-zh} organizes a single time series, observed at its highest available frequency, into a collection of series obtained by non-overlapping temporal aggregation at coarser frequencies. Let $\{y_t\}_{t = 1,\dots,T}$ be the high-frequency series, with $m$ observations per cycle (the seasonal period), and let $\mathcal{K} = \{k_p,\dots,k_1\}$ be the set of $p$ factors of $m$ arranged in descending order, with $k_p = m$ and $k_1 = 1$. Each factor $k \in \mathcal{K}$ identifies a temporal level: $k_1 = 1$ is the high-frequency level, coinciding with the original series, while $k_p = m$ is the most aggregated, lowest-frequency level, with a single observation per cycle.

For a factor $k \in \mathcal{K}$, the non-overlapping aggregated series is
\begin{equation}
x^{[k]}_{j} = \sum_{i = 1}^{k} w^{[k]}_{i} y_{t^\ast + (j-1)k + i - 1},
\qquad j = 1, \dots, \left\lfloor (T - t^\ast + 1)/k \right\rfloor,
\label{eq:tagg}
\end{equation}
a series with seasonal period $M_k = m/k$, while at the high-frequency level the aggregation is the identity, $x^{[1]}_{j} \equiv y_{t^\ast + j - 1}$. The weights $w^{[k]}_i$ define the nature of the aggregation: $w^{[k]}_i = 1$ corresponds to temporal sums, appropriate for flow variables, whereas $w^{[k]}_i = 1/k$ corresponds to temporal averages, the relevant choice for day-ahead electricity prices, where block and baseload products are arithmetic means of the underlying hourly prices. For notational simplicity, \eqref{eq:tagg} is written under a right-endpoint alignment, so that $t^\ast$ captures any incomplete cycle at the left boundary, so that the offset $t^\ast = T - \lfloor T/m \rfloor m + 1$ discards the first incomplete cycle and only complete high-frequency cycles enter the aggregation. When $T$ is a multiple of $m$, $t^\ast = 1$. \autoref{fig:th24} displays the resulting hierarchy for the daily cycle of $m = 24$ hourly prices considered in our application, where the factors $\mathcal{K} = \{24, 12, 8, 6, 4, 3, 2, 1\}$ define $p = 8$ temporal levels, from the hourly up to the daily series.

\begin{figure}[!tb]
\centering
\resizebox{\linewidth}{!}{%
\begin{tikzpicture}[x=0.62cm, y=0.62cm]
\def\m{24}
\foreach \k/\r in {24/0, 12/1, 8/2, 6/3, 4/4, 3/5, 2/6, 1/7}{
  \pgfmathtruncatemacro{\Mk}{\m/\k}
  \pgfmathtruncatemacro{\shade}{10 + (7-\r)*5}
  \node[anchor=east, font=\small, text=black!75] at (-0.5, {-1.15*\r + 0.42}) {$k = \k$};
  \node[anchor=west, font=\small, text=black!75] at (\m + 0.5, {-1.15*\r + 0.42}) {$M_{\k} = \Mk$};
  \foreach \j in {1,...,\Mk}{
    \pgfmathsetmacro{\xa}{(\j-1)*\k + 0.07}
    \pgfmathsetmacro{\xb}{\j*\k - 0.07}
    \draw[draw=thiefblue!70!black, line width=0.4pt, rounded corners=1.6pt, fill=thiefblue!\shade]
      (\xa, {-1.15*\r}) rectangle (\xb, {-1.15*\r + 0.84});
    \ifnum\k>2
      \node[font=\scriptsize] at ({(\xa+\xb)/2}, {-1.15*\r + 0.42}) {$x^{[\k]}_{\j}$};
    \else
      \ifnum\k=2
        \node[font=\tiny] at ({(\xa+\xb)/2}, {-1.15*\r + 0.42}) {$x^{[2]}_{\j}$};
      \else
        \node[font=\tiny] at ({(\xa+\xb)/2}, {-1.15*\r + 0.42}) {$y_{\j}$};
      \fi
    \fi
  }
}
\end{tikzpicture}}
\vskip4pt
\caption{Temporal hierarchy for the daily cycle of $m = 24$ hourly prices, with
$\mathcal{K} = \{24, 12, 8, 6, 4, 3, 2, 1\}$. Each level $k$ partitions the day into
$M_k = 24/k$ non-overlapping blocks, and each block is the average of the $k$ hourly
values it spans, as in~\eqref{eq:taggcycle}.}
\label{fig:th24}
\end{figure}

It is useful to organize the observations cycle by cycle. The sample includes $N = \lfloor T/m \rfloor$ complete cycles, and each cycle contains exactly $M_k$ level-$k$ observations, so the index decomposes as $j = (\tau - 1)M_k + s$, with $\tau = 1,\dots,N$ indexing the cycle and $s = 1,\dots,M_k$ the within-cycle position. Substituting into~\eqref{eq:tagg} and using $M_k k = m$, the level-$k$ observations of cycle $\tau$ read
\begin{equation}
x^{[k]}_{(\tau,s)} \;\equiv\; x^{[k]}_{(\tau-1)M_k + s} \;=\; \sum_{i = 1}^{k} w^{[k]}_{i} y_{t^\ast + (\tau-1)m + (s-1)k + i - 1},
\qquad s = 1,\dots,M_k.
\label{eq:taggcycle}
\end{equation}
Collecting them into $\bm{x}^{[k]}_{\tau} = \begin{bmatrix}
x^{[k]}_{(\tau,1)} \quad \dots \quad x^{[k]}_{(\tau,M_k)}
\end{bmatrix}^{\top} \in \mathbb{R}^{M_k}$, and writing the high-frequency level as $\bm{y}_{\tau} = \bm{x}^{[1]}_{\tau} \in \mathbb{R}^{m}$, the low-frequency (temporally aggregated) levels of the cycle stack into
\begin{equation}
\bm{x}_{\tau} =
\begin{bmatrix}
x^{[k_p]}_{\tau}\\
\bm{x}^{[k_{p-1}]}_{\tau}\\
\vdots\\
\bm{x}^{[k_2]}_{\tau}
\end{bmatrix} \in \mathbb{R}^{n_x},
\qquad n_x = \sum_{k \in \mathcal{K}, k > 1} M_k,
\label{eq:xvec}
\end{equation}
and the temporal hierarchy of cycle $\tau$ is summarized by the coherent vector $\bm{z}_{\tau} = \big[ \bm{x}_{\tau}^{\top}\ \ \bm{y}_{\tau}^{\top} \big]^{\top} \in \mathbb{R}^{n}$, with $n = n_x + m = \sum_{k \in \mathcal{K}} m/k$.

By~\eqref{eq:taggcycle}, every aggregated entry is a fixed linear combination of $\bm{y}_{\tau}$, so the aggregated levels satisfy $\bm{x}_{\tau} = \bm{A}\bm{y}_{\tau}$. The matrix $\bm{A} \in \mathbb{R}^{n_x \times m}$ is built level by level: collecting the weights of level $k$ in the vector $\bm{w}^{[k]} = \big(w^{[k]}_1,\dots,w^{[k]}_k\big)^{\top} \in \mathbb{R}^{k}$, the non-overlapping aggregation of $\bm{y}_{\tau}$ at level $k$ is carried out by
\begin{equation}
\bm{A}^{[k]} = \bm{I}_{M_k} \otimes \big(\bm{w}^{[k]}\big)^{\top} \in \mathbb{R}^{M_k \times m},
\label{eq:Ak}
\end{equation}
whose $s$-th row places the weights $\big(\bm{w}^{[k]}\big)^{\top}$ on the $k$ high-frequency observations of the $s$-th block and zeros elsewhere. Stacking these blocks from the most to the least aggregated level gives
\begin{equation}
\bm{A} = \begin{bmatrix}
\bm{A}^{[k_p]}\\
\bm{A}^{[k_{p-1}]}\\
\vdots\\
\bm{A}^{[k_2]} \end{bmatrix}.
\label{eq:Amat}
\end{equation}
It's worth noting that temporal sums correspond to $\bm{w}^{[k]} = \bm{1}_k$, so that $\bm{A}^{[k]} = \bm{I}_{M_k} \otimes \bm{1}_k^{\top}$, and temporal averages to $\bm{w}^{[k]} = \tfrac{1}{k}\bm{1}_k$. At the most aggregated level $k_p = m$ one has $M_m = 1$, and $\bm{A}^{[m]} = \big(\bm{w}^{[m]}\big)^{\top}$ is a single row spanning the whole cycle. This leads to the two equivalent representations of the temporal hierarchy used throughout. The structural representation
\begin{equation}
\bm{z}_{\tau} = \bm{S}\bm{y}_{\tau},
\qquad \bm{S} = \begin{bmatrix} \bm{A} \\ \bm{I}_m \end{bmatrix} \in \mathbb{R}^{n \times m},
\label{eq:struct}
\end{equation}
expresses the whole hierarchy as a function of the high-frequency level, while the zero-constrained representation
\begin{equation}
\bm{C} \bm{z}_{\tau} = \bm{0}_{(n_x \times 1)},
\qquad \bm{C} = \big[ \bm{I}_{n_x}\ \ {-\bm{A}} \big] \in \mathbb{R}^{n_x \times n},
\label{eq:zero}
\end{equation}
states that each low-frequency level equals the corresponding combination of the high-frequency level \citep{Athanasopoulos2017-zh, Girolimetto2024-qe}.

When the number of high-frequency observations per cycle is constant, as in the $m = 24$ hourly-to-daily hierarchy considered here, the matrices $\bm{A}$, $\bm{S}$ and $\bm{C}$ are the same for every cycle. More generally the cycle length can vary, as when a daily series is aggregated to calendar months (where $m$ ranges from 28 to 31), and these matrices, together with the dimensions $M_k$, $n$ and $n_x$, become cycle-specific, $\bm{A}_\tau$, $\bm{S}_\tau$ and $\bm{C}_\tau$. Being fully determined by the calendar, they remain deterministic and known a priori once the cycle $\tau$ is fixed, so all subsequent developments carry over after replacing $\bm{A}$, $\bm{S}$ and $\bm{C}$ with their cycle-specific counterparts.

\subsection{Multiple base forecasts} \label{sec:base}

Forecasting a target cycle requires a forecast of its whole temporal hierarchy, that is, of all the entries of the coherent vector $\bm{z}_\tau$: the $m$ high-frequency values, the $M_k$ values of each aggregated level $k$, down to the single most aggregated value, for a total of $n$ entries. Rather than relying on a single source of base forecasts, suppose that $Q$ experts (e.g. competing forecasting models), indexed by $q = 1,\dots,Q$, are available, each producing a complete forecast of the hierarchy. To lighten the notation we suppress the cycle index $\tau$ and refer to a generic target vector $\bm{z} \in \mathbb{R}^{n}$. We focus on the balanced case, the one relevant to our application, in which every expert forecasts the entire hierarchy: expert $q$ delivers the base forecast $\widehat{\bm{z}}_q \in \mathbb{R}^{n}$ of $\bm{z}$. The extension to the unbalanced case, where each expert may forecast only some of the temporal levels, is straightforward following \citet{Girolimetto2024-qe}. Since the experts are estimated independently and the forecasts produced for different temporal levels need not satisfy the aggregation relations, the base forecasts are in general incoherent, $\bm{C}\widehat{\bm{z}}_q \neq \bm{0}_{(n_x \times 1)}$. We arrange the base forecasts in the $(n \times Q)$ matrix
\begin{equation}
\widehat{\bm{Z}} = \big[ \widehat{\bm{z}}_1\ \ \dots\ \ \widehat{\bm{z}}_Q \big] \in \mathbb{R}^{n \times Q},
\label{eq:Zmat}
\end{equation}
whose $q$-th column is the forecast of expert $q$, and whose $i$-th row gathers the $Q$ forecasts available for the $i$-th entry of $\bm{z}$. Stacking the experts one below the other gives the $(nQ \times 1)$ vector
\begin{equation}
\widehat{\bm{z}} = \mathrm{vec}\big(\widehat{\bm{Z}}\big) = \big[ \widehat{\bm{z}}_1^{\top}\ \ \dots\ \ \widehat{\bm{z}}_Q^{\top} \big]^{\top}.
\label{eq:zhat}
\end{equation}

We assume the base forecasts to be unbiased, $\mathrm{E}\big[\widehat{\bm{z}}_q\big] = \bm{z}$ for $q = 1,\dots,Q$. Equivalently, the stacked forecasts satisfy $\mathrm{E}\big[\widehat{\bm{z}}\big] = \bm{K}\bm{z}$, where
\begin{equation}
\bm{K} = \bm{1}_Q \otimes \bm{I}_n \in \mathbb{R}^{nQ \times n}
\label{eq:K}
\end{equation}
replicates the target $Q$ times and $\bm{1}_Q$ is the $(Q \times 1)$ vector of ones. Collecting the base forecast errors $\widehat{\bm{\varepsilon}}_q = \widehat{\bm{z}}_q - \bm{z}$ into $\widehat{\bm{\varepsilon}} = \big[ \widehat{\bm{\varepsilon}}_1^{\top}\ \dots\ \widehat{\bm{\varepsilon}}_Q^{\top} \big]^{\top}$, we denote by
\begin{equation}
\bm{W} = \mathrm{E}\big[\widehat{\bm{\varepsilon}}\widehat{\bm{\varepsilon}}^{\top}\big] \in \mathbb{R}^{nQ \times nQ}
\label{eq:W}
\end{equation}
the positive-definite covariance matrix of the stacked base forecast errors. Its $(q,r)$ block $\bm{W}_{qr} = \mathrm{E}\big[\widehat{\bm{\varepsilon}}_q\widehat{\bm{\varepsilon}}_r^{\top}\big] \in \mathbb{R}^{n \times n}$ is the cross-covariance between the errors of experts $q$ and $r$: the diagonal blocks $\bm{W}_{qq}$ describe the error structure of each expert across temporal levels, while the off-diagonal blocks capture the dependence between experts. In this and the following section the matrix $\bm{W}$ is treated as known and its estimation, which plays a central role in the empirical analysis, is discussed in Section~\ref{sec:cov}.

\subsection{Coherent forecast combination} \label{sec:occ}

\begin{figure}[p]
\centering
\includegraphics[width = 0.8\linewidth]{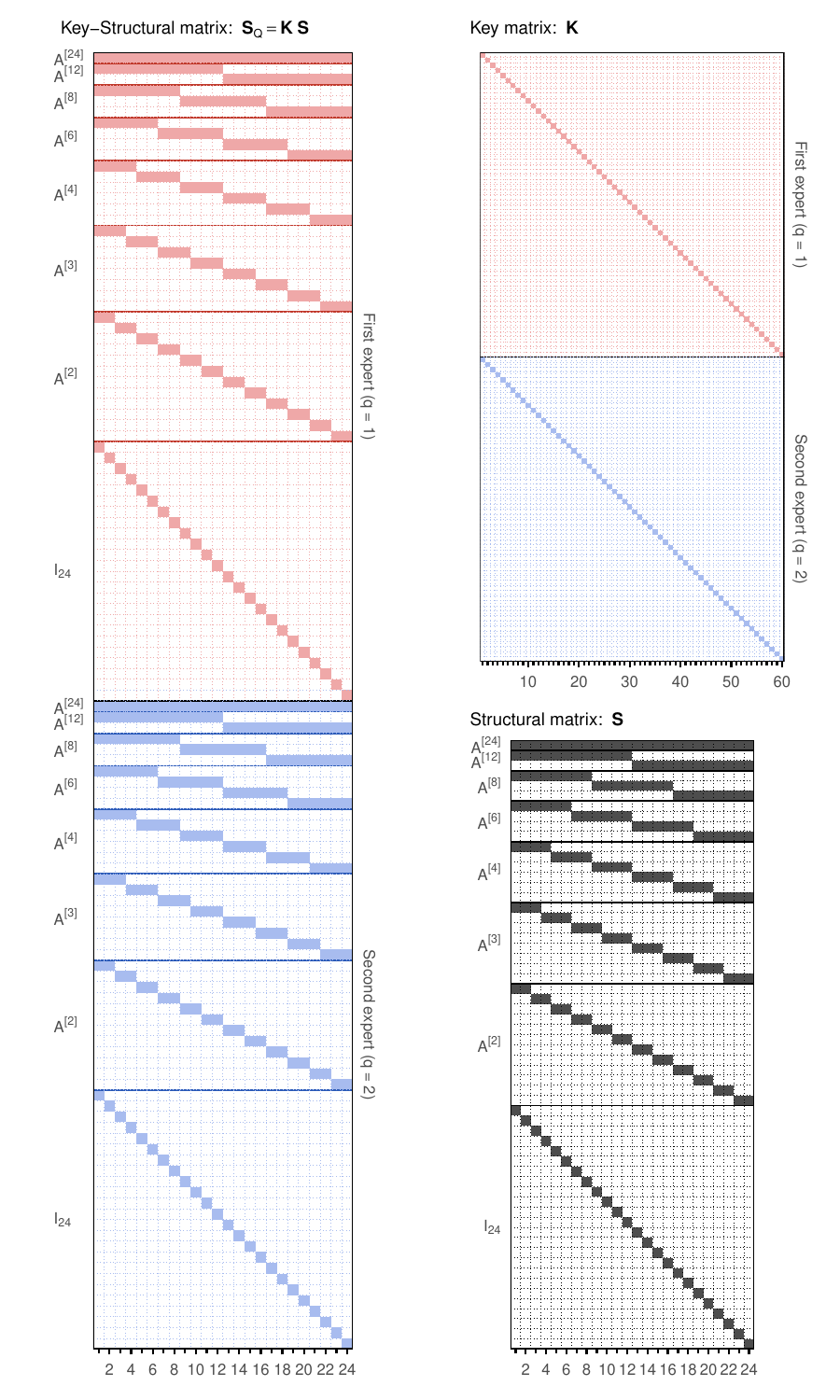}
\caption{The structural matrix $\bm{S}$, the key matrix $\bm{K} = \bm{1}_Q \otimes \bm{I}_n$ and their product $\bm{S}_Q = \bm{K}\bm{S}$ for the hourly hierarchy, with $m = 24$ and $n = 60$. For readability, $Q = 2$ experts are shown, distinguished by color.}
\label{fig:keymat}
\end{figure}

We now combine the $Q$ base forecasts into a single forecast that is coherent with the temporal hierarchy and, at the same time, exploits the information of all experts. Following the trace-minimization approach of \citet{Wickramasuriya2019-hs}, we look for a coherent forecast that is a linear function of the stacked base forecasts and has minimum error variance among all unbiased ones.

Any coherent forecast can be written in structural form as $\widetilde{\bm{z}} = \bm{S}\widetilde{\bm{y}}$, with $\widetilde{\bm{y}} \in \mathbb{R}^m$ the forecast vector of the highest-frequency time series. We take $\widetilde{\bm{y}} = \bm{G}\widehat{\bm{z}}$ linear in the base forecasts with $\bm{G} \in \mathbb{R}^{m \times nQ}$, then $\widetilde{\bm{z}} = \bm{S}\bm{G}\widehat{\bm{z}}$, which is coherent by construction, since $\bm{C}\bm{S} = \bm{0}$. Using $\mathrm{E}[\widehat{\bm{z}}] = \bm{K}\bm{z} = \bm{K}\bm{S}\bm{y}$, the forecast is unbiased for every $\bm{y}$ if and only if $\bm{G}\bm{S}_Q = \bm{I}_m$, where $\bm{S}_Q = \bm{K}\bm{S} = \bm{1}_Q \otimes \bm{S} \in \mathbb{R}^{nQ \times m}$ is the structural matrix replicated across the $Q$ experts, displayed in \autoref{fig:keymat} together with $\bm{K}$ and $\bm{S}$ for the hourly hierarchy. Under unbiasedness, the combined forecast error is $\widetilde{\bm{z}} - \bm{z} = \bm{S}\bm{G}\widehat{\bm{\varepsilon}}$, with covariance $\bm{S}\bm{G}\bm{W}\bm{G}^{\top}\bm{S}^{\top}$. We look for the matrix $\bm{G}$ that minimizes the trace of this covariance subject to the unbiasedness restriction $\bm{G}\bm{S}_Q = \bm{I}_m$  \citep{Wickramasuriya2019-hs}.

\begin{proposition} \label{thm:occ}
Let $\bm{W}$ be the positive-definite covariance matrix of the base forecast errors. The optimal coherent forecast combination $\widetilde{\bm{z}}^{c} = \bm{S}\bm{G}\widehat{\bm{z}}$, obtained using the matrix $\bm{G}$ that minimizes $\mathrm{tr}\big(\bm{S}\bm{G}\bm{W}\bm{G}^{\top}\bm{S}^{\top}\big)$ subject to $\bm{G}\bm{S}_Q = \bm{I}_m$, is given by
\begin{equation}
\widetilde{\bm{z}}^{c} = \bm{S}\big(\bm{S}_Q^{\top}\bm{W}^{-1}\bm{S}_Q\big)^{-1}\bm{S}_Q^{\top}\bm{W}^{-1}\widehat{\bm{z}},
\label{eq:occ}
\end{equation}
or equivalently by
\begin{equation}
\widetilde{\bm{z}}^{c} = \Big[\bm{I}_n - \bm{W}_c\bm{C}^{\top}\big(\bm{C}\bm{W}_c\bm{C}^{\top}\big)^{-1}\bm{C}\Big]\bm{W}_c\bm{K}^{\top}\bm{W}^{-1}\widehat{\bm{z}},
\label{eq:occproj}
\end{equation}
with $\bm{W}_c = \big(\bm{K}^{\top}\bm{W}^{-1}\bm{K}\big)^{-1}$.
\end{proposition}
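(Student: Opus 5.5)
The argument has two parts. First we solve the constrained trace minimization directly, which gives the structural form \eqref{eq:occ}. Then we show that the zero-constrained expression \eqref{eq:occproj} equals it, by factoring the combination through an intermediate "pooled" base forecast.

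\textbf{Structural form.} Write $\mathrm{tr}(\bm{S}\bm{G}\bm{W}\bm{G}^{\top}\bm{S}^{\top}) = \mathrm{tr}(\bm{G}\bm{W}\bm{G}^{\top}\bm{S}^{\top}\bm{S})$. The cleanest route is not a Lagrangian but a Gauss--Markov style decomposition, which makes the result hold for any positive semidefinite weight $\bm{S}^{\top}\bm{S}$. Set
\[
\bm{G}_0 = (\bm{S}_Q^{\top}\bm{W}^{-1}\bm{S}_Q)^{-1}\bm{S}_Q^{\top}\bm{W}^{-1}.
\]
The inverse exists because $\bm{W}\succ 0$ and $\bm{S}_Q = \bm{1}_Q\otimes\bm{S}$ has full column rank $m$, inherited from the identity block of $\bm{S}$. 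Clearly $\bm{G}_0\bm{S}_Q=\bm{I}_m$. Any feasible $\bm{G}$ can be written $\bm{G}=\bm{G}_0+\bm{D}$ with $\bm{D}\bm{S}_Q=\bm{0}$. The cross term then vanishes:
\[
\bm{D}\bm{W}\bm{G}_0^{\top}=\bm{D}\bm{S}_Q(\bm{S}_Q^{\top}\bm{W}^{-1}\bm{S}_Q)^{-1}=\bm{0}.
\]
Hence $\bm{G}\bm{W}\bm{G}^{\top}=\bm{G}_0\bm{W}\bm{G}_0^{\top}+\bm{D}\bm{W}\bm{D}^{\top}$, which is $\succeq \bm{G}_0\bm{W}\bm{G}_0^{\top}$ in the Loewner order. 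Multiplying by $\bm{S}$ on both sides and taking traces shows that $\bm{G}_0$ is a minimizer. This gives \eqref{eq:occ}. The minimized quantity $\bm{S}\bm{G}\widehat{\bm{z}}$ is unique, since $\bm{S}$ has full column rank and $\bm{D}\bm{W}\bm{D}^{\top}=\bm{0}$ forces $\bm{D}=\bm{0}$.

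\textbf{Equivalence with the projection form.} Since $\bm{S}_Q=\bm{K}\bm{S}$, we have
\[
\bm{S}_Q^{\top}\bm{W}^{-1}\bm{S}_Q=\bm{S}^{\top}\bm{W}_c^{-1}\bm{S}, \qquad \bm{S}_Q^{\top}\bm{W}^{-1}\widehat{\bm{z}}=\bm{S}^{\top}\bm{W}_c^{-1}\bar{\bm{z}},
\]
where $\bar{\bm{z}}=\bm{W}_c\bm{K}^{\top}\bm{W}^{-1}\widehat{\bm{z}}$ is the GLS pooled forecast. Here $\bm{W}_c$ is well defined because $\bm{K}$ has full column rank. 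Therefore \eqref{eq:occ} becomes
\[
\widetilde{\bm{z}}^{c}=\bm{S}(\bm{S}^{\top}\bm{W}_c^{-1}\bm{S})^{-1}\bm{S}^{\top}\bm{W}_c^{-1}\bar{\bm{z}},
\]
which is the single-expert MinT reconciliation of $\bar{\bm{z}}$ with error covariance $\bm{W}_c$. It remains to prove the projector identity
\[
\bm{S}(\bm{S}^{\top}\bm{W}_c^{-1}\bm{S})^{-1}\bm{S}^{\top}\bm{W}_c^{-1}=\bm{I}_n-\bm{W}_c\bm{C}^{\top}(\bm{C}\bm{W}_c\bm{C}^{\top})^{-1}\bm{C}.
\]
This is the known structural/zero-constrained equivalence of \citet{Wickramasuriya2019-hs} and \citet{Girolimetto2024-qe}, which can be cited directly. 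A self-contained proof goes as follows. Both sides are idempotent and act as the identity on $\mathrm{col}(\bm{S})=\ker\bm{C}$; this equality of spaces holds because $\bm{C}\bm{S}=\bm{0}$, $\mathrm{rank}\,\bm{S}=m$ and $\mathrm{rank}\,\bm{C}=n_x=n-m$. Both sides also annihilate $\bm{W}_c\bm{C}^{\top}$: the left side because $\bm{S}^{\top}\bm{C}^{\top}=\bm{0}$, the right side by direct computation. The subspaces $\mathrm{col}(\bm{S})$ and $\mathrm{col}(\bm{W}_c\bm{C}^{\top})$ have dimensions $m$ and $n-m$. They intersect trivially: if $\bm{W}_c\bm{C}^{\top}\bm{u}=\bm{S}\bm{v}$, then $\bm{C}\bm{W}_c\bm{C}^{\top}\bm{u}=\bm{0}$, which gives $\bm{u}=\bm{0}$ since $\bm{C}\bm{W}_c\bm{C}^{\top}\succ 0$. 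So the two subspaces together span $\mathbb{R}^n$. Two linear maps that agree on complementary subspaces are equal, and substituting back yields \eqref{eq:occproj}.

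\textbf{Main obstacle.} The step needing most care is the reduction from the $Q$-expert problem to the single-matrix form. Everything hinges on $\bm{S}_Q=\bm{K}\bm{S}$ together with invertibility of $\bm{W}_c$ and $\bm{C}\bm{W}_c\bm{C}^{\top}$, so the rank facts should be checked explicitly. The optimality argument itself is routine.
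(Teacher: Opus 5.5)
Your proof is correct, and it differs from the paper's in both halves.

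For the structural form, the paper sets up a Lagrangian and solves the first-order conditions for $\bm{G}$. Your decomposition $\bm{G}=\bm{G}_0+\bm{D}$ with $\bm{D}\bm{S}_Q=\bm{0}$ only verifies a candidate you must already know. In return it gives global optimality and uniqueness directly, whereas the Lagrangian route yields a stationary point and leaves convexity implicit. Moreover, since $\bm{G}\bm{W}\bm{G}^{\top}-\bm{G}_0\bm{W}\bm{G}_0^{\top}=\bm{D}\bm{W}\bm{D}^{\top}$, it shows that the whole error covariance $\bm{S}\bm{G}\bm{W}\bm{G}^{\top}\bm{S}^{\top}$ is minimized in the Loewner order, not just its trace.

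For the second expression, the paper derives \eqref{eq:occproj} as the solution of a separate problem: the constrained least-squares adjustment $\min_{\bm{z}}(\widehat{\bm{z}}-\bm{K}\bm{z})^{\top}\bm{W}^{-1}(\widehat{\bm{z}}-\bm{K}\bm{z})$ subject to $\bm{C}\bm{z}=\bm{0}$, solved through its KKT system. It then argues equivalence by reparametrization: $\ker\bm{C}=\mathrm{col}(\bm{S})$, so substituting $\bm{z}=\bm{S}\bm{y}$ turns the adjustment into the unconstrained regression $\widehat{\bm{z}}=\bm{S}_Q\bm{y}+\widehat{\bm{\varepsilon}}$, whose GLS fit is \eqref{eq:occ}. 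You never introduce that adjustment problem. Instead you use $\bm{S}_Q=\bm{K}\bm{S}$ to factor the combination through the pooled forecast $\bar{\bm{z}}=\bm{W}_c\bm{K}^{\top}\bm{W}^{-1}\widehat{\bm{z}}$. You then prove the oblique-projector identity directly, by checking both maps on the complementary subspaces $\mathrm{col}(\bm{S})$ and $\mathrm{col}(\bm{W}_c\bm{C}^{\top})$.

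The paper's route buys the interpretation of \eqref{eq:occproj} as a Stone--Byron least-squares adjustment. Yours is purely algebraic and self-contained. It also makes explicit that the optimal combination is exactly a two-step procedure: GLS pooling of the experts, whose error covariance is $\bm{W}_c$, followed by single-expert reconciliation of $\bar{\bm{z}}$ with $\bm{W}_c$. That is a useful point of contrast with the simple-average-then-reconcile strategy of the robustness section.
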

\begin{proof}
The proof is given in \ref{app:occ}.
\end{proof}

The two equivalent expressions extend the cross-sectional coherent forecast combination approach of \citet{Girolimetto2024-qe} to the temporal hierarchy setting, reproducing in this context their structural and zero-constrained representations. Equation~\eqref{eq:occ} is obtained by minimizing the trace of the combined-forecast error covariance over all unbiased linear combinations of the base forecasts. Equation~\eqref{eq:occproj} solves instead a linearly constrained least squares adjustment \citep{Stone1942-pn, Byron1978-zb, Byron1979-pv}. Moreover, with a single expert ($Q = 1$) we have $\bm{S}_Q = \bm{S}$ and $\widehat{\bm{z}} = \widehat{\bm{z}}_1$, and~\eqref{eq:occ} reduces to the optimal temporal reconciliation of a single set of base forecasts $\widetilde{\bm{z}} = \bm{S}(\bm{S}^{\top}\bm{W}^{-1}\bm{S})^{-1}\bm{S}^{\top}\bm{W}^{-1}\widehat{\bm{z}}_1$ as proposed in \citet{Athanasopoulos2017-zh}. The optimal coherent combination is therefore the natural extension of optimal temporal reconciliation to multiple experts. The covariance matrix $\bm{W}$ drives both the combination weights and the reconciliation, and its estimation is examined in Section~\ref{sec:cov}.

\subsection{Estimation of the covariance matrix} \label{sec:cov}

The optimal combination of \autoref{thm:occ} takes the positive-definite matrix $\bm{W}$ as given, but in practice $\bm{W}$ is unknown and must be estimated. Since the errors of the target cycle are not observable, $\bm{W}$ is estimated from the in-sample errors, the differences between the forecasts produced by each model over its training sample and the corresponding observed values \citep{Hyndman2011-oc, Wickramasuriya2019-hs, Lipiecki2026-nm}. 

Mirroring the organization of the base forecasts in \eqref{eq:Zmat} and \eqref{eq:zhat}, let
\begin{equation}
\widehat{\bm{E}}_\tau = \big[ \widehat{\bm{\varepsilon}}_{1,\tau}\ \ \dots\ \ \widehat{\bm{\varepsilon}}_{Q,\tau} \big] =
\begin{bmatrix}
\widehat{\bm{\eta}}_{1,\tau}^{\top}\\
\vdots\\
\widehat{\bm{\eta}}_{n,\tau}^{\top}
\end{bmatrix}
\in \mathbb{R}^{n \times Q},
\qquad \tau = 1,\dots,N,
\label{eq:Emat}
\end{equation}
be the in-sample error matrix of cycle $\tau$: its $q$-th column $\widehat{\bm{\varepsilon}}_{q,\tau} = \widehat{\bm{z}}_{q,\tau} - \bm{z}_\tau \in \mathbb{R}^{n}$ contains the errors of expert $q$ over the whole temporal structure, while its $i$-th row $\widehat{\bm{\eta}}_{i,\tau}^{\top} \in \mathbb{R}^{1 \times Q}$ contains the errors of the $Q$ experts on the $i$-th element of $\bm{z}_\tau$, identified by a temporal level and a within-cycle position and called a node in what follows. Stacking the columns in $\widehat{\bm{\varepsilon}}_\tau = \mathrm{vec}\big(\widehat{\bm{E}}_\tau\big)$, the covariance matrix in~\eqref{eq:W} is estimated by its sample counterpart
\begin{equation}
\widehat{\bm{W}} = \frac{1}{N}\sum_{\tau=1}^{N}\widehat{\bm{\varepsilon}}_\tau\widehat{\bm{\varepsilon}}_\tau^{\top}.
\label{eq:samplecov}
\end{equation}
The sample covariance is a reliable estimate of $\bm{W}$ only when the dimension $nQ$ is small relative to $N$. The matrix $\widehat{\bm{W}}$ has $nQ(nQ+1)/2$ distinct entries, it is singular whenever $N < nQ$ and, even when invertible, its eigenvalues are systematically over-dispersed relative to those of $\bm{W}$, the largest being over-estimated and the smallest under-estimated \citep{ledoit2020analytical}. 
A reliable estimation strategy therefore combines two choices: a correlation structure, fixing which entries of $\bm{W}$ are estimated and which are set to zero, and a shrinkage technique, regularizing the estimation of the remaining entries. 

\begin{table}[!tb]
\centering
\caption{Covariance estimators obtained by combining a shrinkage technique (rows) with a correlation structure (columns). Each abbreviation is the technique with the structure as a subscript, no subscript denoting the full structure. In the block structures, the technique is applied separately to each diagonal block, $\widehat{\bm{W}}_{qq}$ in~\eqref{eq:be} for by-expert and $\widehat{\bm{\Gamma}}_{i}$ in~\eqref{eq:bn} for by-node representations. For each structure, the header reports the sparsity index $I_S$ in~\eqref{eq:sparsity}, the share of entries of $\bm{W}$ set to zero, and the number of free parameters to be estimated: in the application of Section~\ref{sec:setup}, where $n = 60$ and $Q = 4$, $I_S$ equals $0$, $0.75$, $0.983$ and $0.996$, respectively.}
\label{tab:cov-abbrev}
\small
\resizebox{\linewidth}{!}{\begin{tabular}{@{}>{\raggedright\arraybackslash}p{0.51\linewidth}|cccc@{}}
\toprule
\raggedleft\arraybackslash
\parbox[b]{0.95\linewidth}{%
\raggedleft
Graphical representation for $Q = 4$ experts.\par
{\scriptsize
Shaded entries are estimated and the others are set to zero,
with the darker main diagonal marking the error variances,
which are always part of the estimate.\par
}\vskip0.5em}\makecell[br]{$\rightarrow$\\[1em] \phantom{.}}
& \resizebox{0.1\linewidth}{!}{\begin{tikzpicture}[x=0.16cm, y=-0.16cm]
  \fill[black!4] (0,0) rectangle (28,28);
  \fill[thiefblue!30] (0,0) rectangle (28,28);
  \foreach \s in {7,14,21}{\draw[dashed, black!50, line width=0.45pt] (\s,0) -- (\s,28) (0,\s) -- (28,\s);}
  \foreach \i in {1,...,28}{\fill[thiefblue!70] (\i-1,\i-1) rectangle (\i,\i);}
  \draw[black, line width=1pt] (0,0) rectangle (28,28);
\end{tikzpicture}} & \resizebox{0.1\linewidth}{!}{\begin{tikzpicture}[x=0.16cm, y=-0.16cm]
\fill[black!4] (0,0) rectangle (28,28);
\foreach \q in {0,...,3}{\fill[thiefblue!30] (\q*7, \q*7) rectangle (\q*7+7, \q*7+7);}
\foreach \s in {7,14,21}{\draw[dashed, black!50, line width=0.45pt] (\s,0) -- (\s,28) (0,\s) -- (28,\s);}
\foreach \i in {1,...,28}{\fill[thiefblue!70] (\i-1,\i-1) rectangle (\i,\i);}
\draw[black, line width=1pt] (0,0) rectangle (28,28);
\end{tikzpicture}} & \resizebox{0.1\linewidth}{!}{\begin{tikzpicture}[x=0.16cm, y=-0.16cm]
\fill[black!4] (0,0) rectangle (28,28);
\foreach \a in {0,...,3}{\foreach \b in {0,...,3}{\foreach \d in {1,...,7}{
\fill[thiefblue!30] (\a*7+\d-1, \b*7+\d-1) rectangle (\a*7+\d, \b*7+\d);}}}
\foreach \s in {7,14,21}{\draw[dashed, black!50, line width=0.45pt] (\s,0) -- (\s,28) (0,\s) -- (28,\s);}
\foreach \i in {1,...,28}{\fill[thiefblue!70] (\i-1,\i-1) rectangle (\i,\i);}
\draw[black, line width=1pt] (0,0) rectangle (28,28);
\end{tikzpicture}}  & \resizebox{0.1\linewidth}{!}{\begin{tikzpicture}[x=0.16cm, y=-0.16cm]
\fill[black!4] (0,0) rectangle (28,28);
\foreach \i in {1,...,28}{\fill[thiefblue!70] (\i-1,\i-1) rectangle (\i,\i);}
\foreach \s in {7,14,21}{\draw[dashed, black!50, line width=0.45pt] (\s,0) -- (\s,28) (0,\s) -- (28,\s);}
\draw[black, line width=1pt] (0,0) rectangle (28,28);
\end{tikzpicture}} \\
\raggedleft\arraybackslash sparsity index $\rightarrow$ &\scriptsize $0$ & \scriptsize $1-1/Q$ & \scriptsize $1-1/n$ & \scriptsize $1-1/(nQ)$\\
\raggedleft\arraybackslash free parameters $\rightarrow$& \scriptsize $\dfrac{nQ(nQ+1)}{2}$ & \scriptsize $\dfrac{Qn(n+1)}{2}$ & \scriptsize $\dfrac{nQ(Q+1)}{2}$ & \scriptsize $nQ$\\
\cmidrule(l){2-5}
\textbf{Shrinkage approach and references} & \textbf{Full} &  \textbf{By-expert} &  \textbf{By-node} &  \textbf{Diag.}\\
\midrule
None, variances only \linebreak \citet{Hyndman2016, Athanasopoulos2017-zh} & --- & --- & --- & $wls$\\
\addlinespace[3pt]
Linear~\eqref{eq:shr} \linebreak \citet{ledoit2004, Schafer2005-jh} & $shr$ & $shr_{be}$ & $shr_{bn}$ & ---\\
\addlinespace[3pt]
Non-linear~\eqref{eq:nlshr} with Stein's loss \linebreak \citet{ledoit2022power} & $lis$ & $lis_{be}$ & $lis_{bn}$ & ---\\
\addlinespace[3pt]
Non-linear~\eqref{eq:nlshr} with Frobenius loss of $\bm{W}^{-1}$ \linebreak \citet{ledoit2022quadratic} & $qis$ & $qis_{be}$ & $qis_{bn}$ & ---\\
\addlinespace[3pt]
Non-linear~\eqref{eq:nlshr} with symmetrized Kullback--Leibler loss \linebreak \citet{ledoit2022power} & $gis$ & $gis_{be}$ & $gis_{bn}$ & ---\\
\bottomrule
\end{tabular}}
\end{table}

We first consider the four correlation structures shown in the first row of \autoref{tab:cov-abbrev}, each corresponding to a different pattern of zero constraints on the covariance matrix. At one extreme, the diagonal structure sets all covariances to zero and retains only the $nQ$ error variances,
\begin{equation}
\widehat{\bm{W}}_{\mathrm{wls}} = \widehat{\bm{W}} \odot \bm{I}_{nQ},
\label{eq:wls}
\end{equation}
where $\odot$ denotes the Hadamard product. The resulting estimator is positive definite by construction and requires no regularization. At the opposite side, the full structure leaves $\widehat{\bm{W}}$ unrestricted and is the only specification that accounts for correlations between different experts and different nodes.

Between these extremes lie two block-diagonal structures. The by-expert ($\text{be}$) structure retains the diagonal blocks of $\widehat{\bm{W}}$,
\begin{equation}
\widehat{\bm{W}}_{\text{be}} = \operatorname{diag}\big(\widehat{\bm{W}}_{11}, \dots, \widehat{\bm{W}}_{QQ}\big),
\qquad
\widehat{\bm{W}}_{qq} = \frac{1}{N}\sum_{\tau=1}^{N}\widehat{\bm{\varepsilon}}_{q,\tau}\widehat{\bm{\varepsilon}}_{q,\tau}^{\top},
\label{eq:be}
\end{equation}
preserving each expert’s error dependence across temporal levels while assuming independence between experts. The by-node (bn) structure instead groups the errors of all experts at the same node. Let $\bm{P}$ denote the $(nQ\times nQ)$ commutation matrix \citep{Magnus2019-mn}, such that $\bm{P}\widehat{\bm{\varepsilon}}_\tau=\mathrm{vec}(\widehat{\bm{E}}_\tau^\top)$. Then,
\begin{equation}
\widehat{\bm{W}}_{\text{bn}} =
\bm{P}^{\top}
\operatorname{diag}\big(\widehat{\bm{\Gamma}}_{1}, \dots, \widehat{\bm{\Gamma}}_{n}\big)
\bm{P},
\qquad
\widehat{\bm{\Gamma}}_{i} =
\frac{1}{N}\sum_{\tau=1}^{N}\widehat{\bm{\eta}}_{i,\tau}\widehat{\bm{\eta}}_{i,\tau}^{\top},
\label{eq:bn}
\end{equation}
which preserves the dependence among experts at each node while assuming independence across nodes. The two block structures extend to the temporal setting the by-expert and by-variable organizations of \citet{Girolimetto2024-qe}, but we use the term by-node because temporal hierarchies involve a single variable observed at different aggregation levels.

The parsimony of the four structures can be compared through their matrix sparsity, defined as the proportion of entries constrained to zero:
\begin{equation}
I_S = 1 - \frac{\nu}{(nQ)^{2}},
\label{eq:sparsity}
\end{equation}
where $\nu$ is the number of nonzero entries. The full structure keeps all $\nu = (nQ)^{2}$ entries, so $I_S = 0$. The by-expert structure keeps $\nu = Qn^{2}$ entries and the by-node structure $\nu = nQ^{2}$, giving $I_S = 1 - 1/Q$ and $I_S = 1 - 1/n$, while the diagonal structure keeps only $\nu = nQ$ entries, so $I_S = 1 - 1/(nQ)$. \autoref{tab:cov-abbrev} reports, together with $I_S$, the number of free parameters of each structure. In the application, where $n = 60$ and $Q = 4$, the by-expert structure sets $75\%$ of the entries to zero and leaves $7\,320$ parameters to be estimated, against the $28\,920$ of the full structure, while the by-node and the diagonal structures set $98.3\%$ and $99.6\%$ of the entries to zero, with $600$ and $240$ parameters, respectively.

Even after imposing a correlation structure, the remaining nonzero entries must still be estimated from the same $N$ cycles and, for the full and by-expert structures, their dimension remains large relative to $N$. We therefore regularize the sample covariance by shrinkage, which trades a small bias for a substantial reduction in estimation variance and yields a well-conditioned, positive-definite estimator. We consider one linear and three nonlinear shrinkage techniques:

\begin{itemize}
\item \emph{Linear shrinkage} \citep{ledoit2004} use a convex combination of covariance and a structured target, here the diagonal of $\widehat{\bm{W}}$,
\begin{equation}
\widehat{\bm{W}}_{\text{shr}} = 
\widehat{\lambda}\big(\widehat{\bm{W}}\odot\bm{I}_{nQ}\big)
+
(1-\widehat{\lambda})\widehat{\bm{W}},
\label{eq:shr}
\end{equation}
where the shrinkage intensity $\widehat{\lambda}$ is estimated analytically following \citet{Schafer2005-jh}. It is widely used in forecast reconciliation starting from \cite{Wickramasuriya2019-hs};

\item \emph{Nonlinear shrinkage} \citep{ledoit2022power} regularizes the spectrum of the sample covariance matrix while preserving its eigenvectors. Given the spectral decomposition $\widehat{\bm{W}}= \displaystyle\sum_{i=1}^{nQ}\widehat{\ell}_i\bm{u}_i\bm{u}_i^\top$, the estimator takes the form
\begin{equation}
\widehat{\bm{W}}_{\text{nl}} = \sum_{i=1}^{nQ} \widehat{\delta}_i \bm{u}_i \bm{u}_i^\top,
\label{eq:nlshr}
\end{equation}
replacing each sample eigenvalue $\widehat{\ell}_i$ with an individually shrunk counterpart $\widehat{\delta}_i$ computed analytically through \citet{ledoit2020analytical}. 
We consider the three estimators reviewed by \citet{ledoit2022power}: linear-inverse shrinkage (lis), quadratic-inverse shrinkage \citep[qis, see also][]{ledoit2022quadratic}, and geometric-inverse shrinkage (gis).
\end{itemize}

Each correlation structure can be combined with any shrinkage technique. Shrinkage is applied either to the full covariance matrix or, blockwise, to the diagonal blocks $\widehat{\bm{W}}_{qq}$ in \eqref{eq:be} and $\widehat{\bm{\Gamma}}_{i}$ in \eqref{eq:bn}, with the shrinkage parameters estimated separately for each block. Since a block-diagonal matrix is positive definite if and only if its blocks are, every resulting estimator is a valid input for \autoref{thm:occ}. The resulting thirteen estimators are summarized in \autoref{tab:cov-abbrev}, where the abbreviations combine the shrinkage technique with the correlation structure as a subscript, while no subscript denotes the full structure. The by-expert estimator with linear shrinkage, $shr_{be}$, is used as the benchmark in Section~\ref{sec:main}, whereas all thirteen estimators are compared in Section~\ref{sec:robust}.

\section{Forecasting day-ahead electricity prices}\label{sec:exp}

\subsection{Experimental setup} \label{sec:setup}

We evaluate the proposed methodology on the day-ahead electricity price dataset of \citet{Lipiecki2026-nm}, which is naturally observed at multiple temporal frequencies and on which the value of temporal reconciliation has already been established. We consider the same two major European markets, the German EPEX-DE and the Spanish OMIE, and the four-year out-of-sample period spanning 2021 to 2024. Crucially, we do not re-estimate any forecasting model: the base forecasts at all temporal levels are taken directly from the replication package of \citet{Lipiecki2026-nm}\footnote{Data and base forecasts are publicly available at \url{https://github.com/lipiecki/thief}.}, and enter our framework unchanged as the inputs to be combined. 

The day-ahead market is settled on a daily cycle of $m = 24$ hourly prices, the highest available frequency. The set of factors $\mathcal{K} = \{1,2,3,4,6,8,12,24\}$ defines eight temporal levels: the 24 hourly prices, the non-overlapping blocks of $2$, $3$, $4$, $6$, $8$ and $12$ hours, and the baseload, that is the average price of the day. Since block and baseload products are arithmetic means of the underlying hourly prices, the aggregation weights are $w^{[k]}_i = 1/k$ (see Section~\ref{sec:th}). 

The base forecasts come from the four model classes of \citet{Lipiecki2026-nm}, deliberately chosen to span a wide range of complexity, statistical assumptions and economic content. The first, ARX, is a parsimonious autoregressive model with exogenous regressors estimated by linear regression. The second, NARX, is its nonlinear counterpart, combining the forecasts of several feedforward neural networks to reduce estimation noise. The third, XGB, is a fully nonparametric learner based on gradient-boosted decision trees, with hyperparameters tuned on the calibration window. The fourth, MITRA, is a large pretrained tabular transformer foundation model, applied day by day in a rolling-window fashion and representing the recent wave of foundation models in forecasting. The careful design and estimation of these forecasts is a central contribution of \citet{Lipiecki2026-nm}, who devote most of their analysis to it, and we refer the reader to their work for the full specification of each model and of its input features. What matters for the present study is that the four models embody genuinely different views of the same variable of interest (prices), so that combining them is informative, and that, for each market, every model produces base forecasts independently at all temporal levels, which are in general incoherent. This places us in the balanced case of Section~\ref{sec:base}, with $Q = 4$ experts each forecasting the whole hierarchy ($n = 60$). 

The base forecasts are produced day by day under the rolling scheme: each model is calibrated on a three-year window of past observations, rolled forward by one day at each step, with the initial window ending on 31 December 2020 and the evaluation running over the subsequent four years. We focus on the post-forecasting step of enforcing the temporal constraints, in two stages. First, we replicate the single-expert temporal reconciliation of \citet{Lipiecki2026-nm} for each of the four models, re-implementing it in \textsf{R} with the \pkgreco{} package, whereas their analysis is carried out in \textsf{Julia}. This replication validates our pipeline and provides the single-expert benchmark. Second, we extend the procedure to the whole set of experts through the coherent combination of Section~\ref{sec:occ}, implemented in the \pkgcomb{} package. In both stages the covariance matrix $\bm{W}$ is estimated from the in-sample base forecast errors over the calibration window, using the estimators of Section~\ref{sec:cov}.

Forecast accuracy is assessed by the mean absolute error (MAE) and the root mean squared error (RMSE), computed separately at each temporal level. For level $k$, let $\check{x}^{[k]}_{\tau,s}$ be the forecast of the approach under evaluation, base, reconciled or combined, for the $s$-th observation of day $\tau$, let $x^{[k]}_{\tau,s}$ be the realized value, $D$ the number of test days, and $M_k = 24/k$ the number of level-$k$ observations per day. Then
\begin{equation}
\mathrm{MAE}^{[k]} = \frac{1}{DM_k}\sum_{\tau=1}^{D}\sum_{s=1}^{M_k}\big| x^{[k]}_{\tau,s} - \check{x}^{[k]}_{\tau,s}\big|,
\qquad
\mathrm{RMSE}^{[k]} = \sqrt{\frac{1}{DM_k}\sum_{\tau=1}^{D}\sum_{s=1}^{M_k}\big( x^{[k]}_{\tau,s} - \check{x}^{[k]}_{\tau,s}\big)^{2}}.
\label{eq:metrics}
\end{equation}
In addition, statistical significance is assessed using the Diebold–Mariano (DM) test \citep{Diebold1995-xs} for pairwise forecast comparisons and the Multiple Comparison with the Best (MCB) Nemenyi procedure \citep{Koning2005-hv} as implemented in the \texttt{tsutils} package \citep{tsutils}.

\subsection{Main results} \label{sec:main}

\begin{table}[!tb]
\centering
\linespread{1}
\small
\setlength{\tabcolsep}{2pt}
\caption{MAE and RMSE~\eqref{eq:metrics} at each temporal level for the German (EPEX-DE) and Spanish (OMIE) day-ahead markets over the 2021--2024 test period. For each expert, the columns report the accuracy of the base forecasts, of the reconciled forecasts (subscript $r$, $shr$ estimator) and of the coherent combination of the four experts (subscript $c$, $shr_{be}$ estimator), together with the percentage improvements of the latter two over the base forecasts.}
\label{tab:main}
\resizebox{\linewidth}{!}{
\input{./tables/shrbe_tab_full_dm.tex}}
\raggedright
{\footnotesize \vskip0.25em\textbf{Note:} Values in bold indicate the best percentage improvement between reconciliation and the coherent combination. Superscripts indicate the results of the DM test (p-value $= 0.01$): $^1$ equal performance with base forecasts, $^2$ equal performance with reconciled forecasts.}
\end{table}

\autoref{tab:main} reports\footnote{Results for all covariance estimators are presented in the robustness analysis of Section~\ref{sec:robust}} the MAE and RMSE of the base forecasts of the four experts, of their single-expert temporal reconciliation (subscript $r$, using the $shr$ estimator), and of the proposed coherent combination (subscript $c$, using the reference estimator $shr_{be}$), together with the percentage improvement of the latter two over the corresponding base forecasts for each temporal level and both markets. Since the combination pools the four experts, a single combined forecast is available at each level, and its percentage improvement is computed with respect to each expert in turn. The significance of the accuracy differences is assessed by the DM test \citep{Diebold1995-xs} at the $1\%$ level.

The single-expert results replicate, with an independent implementation, the central finding of \citet{Lipiecki2026-nm}: temporal reconciliation improves upon the base forecasts for every model, at every temporal level and in both markets. MITRA provides the most accurate base and reconciled forecasts at every level, ARX the least accurate, and the few cases in which reconciliation does not significantly improve upon the base forecasts (superscript~$1$) almost exclusively involve the Spanish market.

By exploiting information from multiple experts, the temporal coherent combination consistently outperforms single-expert reconciliation. At the hourly level, where the day-ahead market is settled, it achieves the highest accuracy in both markets. The magnitude of the improvement depends on the quality of the individual expert: the largest gains are observed for the weakest models, although even the strongest expert benefits from replacing its reconciled forecasts with the combined forecast in almost every case. As with reconciliation, the improvements generally become larger at higher temporal aggregation levels, making coherent combination particularly valuable for the frequencies relevant to block and baseload trading. 

The Diebold–Mariano test supports these findings. The combined forecasts are significantly more accurate than both the base and the reconciled forecasts in the vast majority of cases. The few exceptions concern MITRA in the German market at the highest aggregation levels, where the combined and reconciled forecasts are statistically indistinguishable and reconciliation occasionally attains a marginally lower MAE. These isolated cases occur at the most aggregated frequencies (12H or 24H).

\begin{figure}[!tb]
\centering
\includegraphics[width = \linewidth]{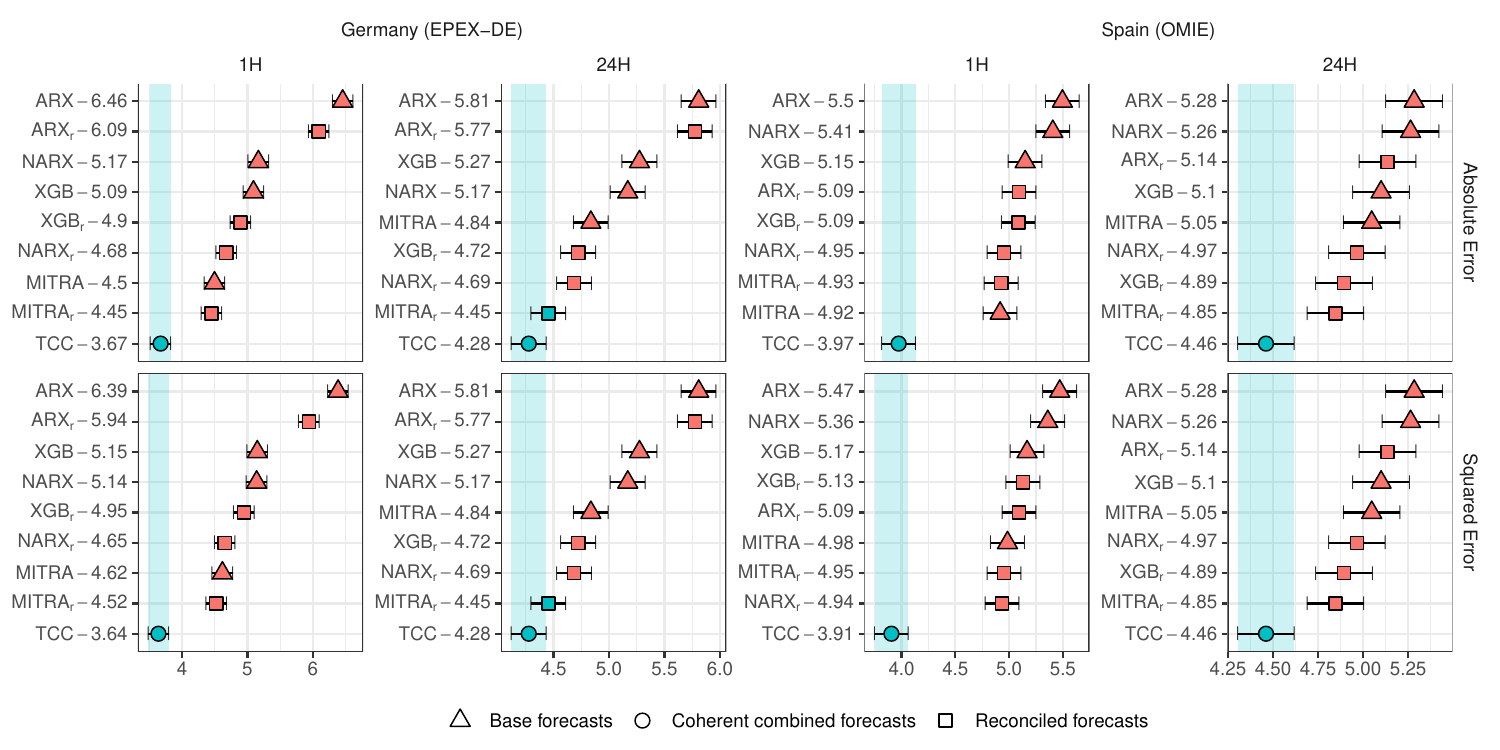}
\caption{Average ranks and MCB Nemenyi test intervals, computed on the absolute (top) and squared (bottom) errors over the $1\,461$ test days, at the hourly (1H) and daily (24H) levels for the German and Spanish markets. TCC denotes the temporal coherent combination ($shr_{be}$) and the subscript $r$ the reconciled forecasts ($shr$). The shaded area marks the confidence region of the best-ranked approach: approaches whose intervals do not overlap it perform significantly worse.}
\label{fig:mcb}
\end{figure}

\autoref{fig:mcb} summarizes this evidence through the MCB Nemenyi test \citep{Koning2005-hv}: for the hourly and daily levels of both markets, each panel displays the ranks of the nine approaches (four base, four reconciled, and the coherent combination, denoted TCC), averaged across the test days on absolute and squared errors. TCC achieves the lowest average rank in every panel and is the sole approach in the confidence region of the best in all cases but the daily level of the German market, where reconciled MITRA is statistically closed to the temporal combination, in line with the DM results of \autoref{tab:main}. The comparison extended to all eight temporal levels, reported in the Online Appendix, confirms this picture, with the confidence region of the best including reconciled MITRA only at the 12- and 24-hour levels of the German market.

\section{Robustness analysis} \label{sec:robust}

The empirical results presented in Section~\ref{sec:main} are based on three modelling choices: the use of the $shr_{be}$ covariance estimator, a pool of four experts, and the simultaneous estimation of forecast combination and reconciliation. This section evaluates the robustness of the results with respect to each of these choices. Sections~\ref{sec:covariances} and~\ref{sec:shr} examine the role of the correlation structure and the shrinkage technique, respectively, Section~\ref{sec:loo} assesses the contribution of each expert through a leave-one-out analysis, and Section~\ref{sec:seq} compares the proposed simultaneous approach with the sequential strategy that first combines forecasts and then reconciles them. Unless otherwise stated, results are reported in terms of MAE, however the corresponding RMSE results, reported in the Online Appendix, lead to the same conclusions.

\subsection{Alternative covariance estimators} \label{sec:covariances}

\begin{table}[tb]
\centering
\linespread{1}
\small
\setlength{\tabcolsep}{3pt}
\caption{MAE at each temporal level for the base forecasts, the single-expert reconciled forecasts with the $wls$ and $shr$ estimators, and the temporal coherent combination with the $wls$, $shr$, $shr_{bn}$ and $shr_{be}$ estimators of \autoref{tab:cov-abbrev}. For each market and temporal level, bold and italics denote the best and the second-best approach, respectively.}
\label{tab:covariance}
\resizebox{\linewidth}{!}{
\input{./tables/tab_covariance_mae.tex}}
\end{table}

\autoref{tab:covariance} investigates the role of the correlation structure. The upper panels compare the single-expert reconciled forecasts obtained with the diagonal ($wls$) and linear shrinkage ($shr$) estimators, while the bottom panel reports the coherent combination under the four covariance structures of \autoref{tab:cov-abbrev}, using linear shrinkage whenever required.

For single-expert reconciliation, the $shr$ estimator consistently improves upon $wls$ across all experts, temporal levels, and markets, confirming that the correlation of forecast errors across temporal levels contains useful information. Reconciled MITRA with $shr$ consequently provides the strongest single-expert benchmark. The comparison is even more informative for coherent combination. The by-expert structure ($shr_{be}$) is the most accurate specification at every temporal level in both markets, the only exceptions being the 12H and 24H levels in Germany discussed in Section~\ref{sec:main}. The diagonal structure ranks second in Spain, where it already outperforms every reconciled expert, but in Germany it never matches reconciled MITRA. The remaining two structures ($shr$ and $shr_{bn}$) perform substantially worse. Overall, the results indicate that preserving the temporal dependence within each expert while treating different experts as conditionally independent provides the most effective balance between statistical efficiency and estimation error.

\subsection{The role of shrinkage} \label{sec:shr}

\autoref{tab:shrinkage} completes the comparison of the covariance estimators by fixing the correlation structure and varying the shrinkage technique.

\begin{table}[tb]
\centering
\linespread{1}
\small
\setlength{\tabcolsep}{3pt}
\caption{MAE of the temporal coherent combination for the linear ($shr$) and nonlinear ($lis$, $qis$, $gis$) shrinkage techniques of \autoref{tab:cov-abbrev}, each applied to the full matrix (no subscript), to the by-node blocks (subscript $bn$) and to the by-expert blocks (subscript $be$); the diagonal estimator $wls$ is reported in \autoref{tab:covariance}. For each market and temporal level, bold and italics denote the best and the second-best estimator, respectively.}
\label{tab:shrinkage}
\resizebox{\linewidth}{!}{\input{./tables/tab_shr_mae.tex}}
\end{table}

The choice of shrinkage has a much smaller impact than the choice of correlation structure. Under the by-expert specification, differences among the four estimators are negligible at every temporal level and in both markets, with the three nonlinear methods almost always agreeing to the second decimal place. Moreover, nonlinear shrinkage never outperforms its linear counterpart. This is consistent with the relatively low concentration ratio of the by-expert blocks ($n=60$), for which the additional flexibility of nonlinear spectral corrections provides little advantage. Even under the full covariance structure ($nQ=240$), linear shrinkage remains marginally more accurate. Overall, the evidence indicates that the correlation structure is the primary determinant of forecasting performance, whereas the particular shrinkage technique plays only a secondary role, supporting the use of the simple and analytically available $shr_{be}$ estimator as the reference specification.

\subsection{Leaving out an expert} \label{sec:loo}

The robustness of the coherent combination with respect to the composition of the expert pool is assessed through a leave-one-out analysis. \autoref{tab:loo} reports the MAE and RMSE obtained with the $shr_{be}$ estimator after excluding one expert at a time, together with the full-pool results of Section~\ref{sec:main}.

As expected, removing MITRA, the strongest expert, has the largest impact on forecast accuracy in both markets. Nevertheless, the resulting three-expert combination remains essentially as accurate as reconciled MITRA in Germany and still outperforms it in Spain, while clearly improving upon the reconciled forecasts of the remaining experts. At the opposite extreme, excluding ARX has only a negligible effect, indicating that most of its predictive information is already captured by the other models. On the other hand, excluding XGB leads to a slight but systematic improvement in Germany and a marginal reduction in RMSE in Spain. This suggests that, once the covariance matrix has to be estimated, an expert whose information largely overlaps with that of the remaining models may contribute more estimation noise than additional signal, so that enlarging the pool is not necessarily beneficial \citep{Timmermann2006}. Overall, however, the leave-one-out combinations remain remarkably stable. At the hourly level, every reduced pool performs at least as well as the best reconciled expert, the only exception being the pool without MITRA in the German market, which is statistically indistinguishable from reconciled MITRA.

\begin{table}[tb]
\centering
\linespread{1}
\small
\setlength{\tabcolsep}{3pt}
\caption{MAE and RMSE at each temporal level of the temporal coherent combination ($shr_{be}$) when the expert in the first column is excluded from the pool ($Q = 3$); None denotes the full pool of four experts ($Q = 4$). For each market and temporal level, bold and italics denote the best and the second-best pool, respectively.}
\label{tab:loo}
\resizebox{\linewidth}{!}{\input{./tables/tab_split_shrbe.tex}}
\end{table}

\subsection{A first look at sequential alternatives: first combination, then reconciliation} \label{sec:seq}

The proposed method performs forecast combination and reconciliation simultaneously. Following \citet{Rostami-Tabar2024-qf} and \cite{Girolimetto2024-qe}, we consider a simple and natural sequential alternative that first combines the forecasts of the $Q$ experts at each node using the simple average and then reconciles the resulting forecasts using either the diagonal variance estimator ($wls$) or the linear shrinkage covariance estimator ($shr$). \autoref{tab:seq} compares the unreconciled simple average ($sa$), its two reconciled versions ($sa+wls$ and $sa+shr$), the best single-expert reconciliation (MITRA with $shr$), and the best-performing coherent combination ($shr_{be}$).

Reconciling the simple average ($sa+wls$ and $sa+shr$) consistently improves upon the unreconciled forecasts, making the sequential strategy a competitive benchmark. In the Spanish market it ranks second only to the coherent combination at every temporal level. Nevertheless, the proposed simultaneous approach remains the most accurate specification in both markets and at every level. The advantage is particularly pronounced in Germany, where the sequential procedure does not even outperform reconciled MITRA, whereas in Spain the differences are smaller. This result reflects the additional information exploited by the one-step estimator. Averaging before reconciliation discards both the relative accuracy of the individual experts and the dependence structure of their forecast errors, whereas the coherent combination incorporates both sources of information directly through the covariance matrix.

\begin{table}[tb]
\centering
\linespread{1}
\small
\setlength{\tabcolsep}{3pt}
\caption{MAE at each temporal level of the sequential first-combined-then-reconciled approaches, in which the equal-weighted average of the four experts ($sa$) is reconciled as a single expert with the $wls$ and $shr$ estimators, compared with the best single-expert reconciliation (MITRA base forecasts, $shr$ estimator) and the best temporal coherent combination ($shr_{be}$). For each market and temporal level, bold and italics denote the best and the second-best approach, respectively.}
\label{tab:seq}
\resizebox{\linewidth}{!}{\input{./tables/tab_seq_mae.tex}}
\raggedright
{\scriptsize \vskip0.25em\textbf{$^\ast$} Simple average combination is not coherent.}
\end{table}

\section{Conclusion} \label{sec:conclusion}

This paper has extended the coherent forecast combination framework of \citet{Girolimetto2024-qe} to temporal hierarchies. The optimal temporal coherent approach combines the forecasts that $Q$ experts produce into a single forecast that satisfies the temporal aggregation constraints and has minimum error variance among the linear unbiased combinations. The solution is available in closed form, in two equivalent structural and zero-constrained representations, and contains the optimal single-expert temporal reconciliation of \citet{Athanasopoulos2017-zh} as the special case $Q = 1$. Since the solution depends on the covariance matrix of the stacked base forecast errors, whose dimension grows with both the hierarchy and the number of experts, we have analysed different estimation strategy that crosses four correlation structures with linear and nonlinear shrinkage techniques.

The empirical analysis, based on the day-ahead electricity price dataset of \citet{Lipiecki2026-nm}, shows that temporal coherent combination is more accurate than the best reconciled expert at almost every temporal level in both the German and Spanish markets, with statistically significant improvements at nearly all temporal levels. The robustness analysis further shows that the key modelling decision is the choice of correlation structure: preserving the dependence of each expert’s forecast errors across temporal levels while treating different experts as uncorrelated consistently delivers the best performance, whereas the specific shrinkage technique plays only a secondary role. Moreover, the improvements are not driven by any single expert, and the proposed simultaneous estimator consistently outperforms the sequential strategy that first combines forecasts and then reconciles them.

Some directions for future research remain open. The present analysis is confined to point forecasts, and a probabilistic extension in the spirit of \citet{Girolimetto2024-jm} and \citet{Wickramasuriya2024-yv} would allow the combination to deliver coherent predictive distributions, which are increasingly required for risk management in electricity markets. A second direction concerns estimating the covariance matrix from out-of-sample validation errors, as proposed by \citet{Abolghasemi2025-fp}, rather than from in-sample errors as in standard reconciliation practice. The sequential strategy of Section~\ref{sec:seq} also deserves a deeper exploration. In our experiments the combination step relied on the equal-weighted average, and replacing it with estimated combination weights, or characterizing the conditions under which the two-step procedure attains an accuracy comparable to the simultaneous solution, would clarify how much of the documented gap can be recovered at a lower modeling cost. Finally, the approach could be embedded in a cross-temporal setting \citep{Di_Fonzo2023-dg, Girolimetto2024-jm}, combining experts across market zones and temporal granularities at once.

\appendix

\section{Proof of \autoref{thm:occ}} \label{app:occ}

\noindent The two expressions of \autoref{thm:occ} are the temporal counterparts of the structural form (Theorem~2) and of the zero-constrained form (Theorem~1) of \citet{Girolimetto2024-qe}. We prove each in turn and then establish their equivalence, adapting the argument of their Appendix~D. Throughout, $\bm{W}_c = \big(\bm{K}^{\top}\bm{W}^{-1}\bm{K}\big)^{-1}$, which is well defined since $\bm{K}^{\top}\bm{W}^{-1}\bm{K}$ has full rank.

\subsection*{Structural form (\ref{eq:occ})}
\noindent  Write the coherent forecast as $\widetilde{\bm{z}}^{c} = \bm{S}\bm{G}\widehat{\bm{z}}$, which satisfies $\bm{C}\widetilde{\bm{z}}^{c} = \bm{0}$ for any $\bm{G}$. Unbiasedness, $\mathrm{E}[\widetilde{\bm{z}}^{c}] = \bm{S}\bm{y}$ for every $\bm{y}$, requires $\bm{S}\bm{G}\bm{K}\bm{S} = \bm{S}$, that is $\bm{G}\bm{S}_Q = \bm{I}_m$ with $\bm{S}_Q = \bm{K}\bm{S}$, since $\bm{S}$ has full column rank. The combined forecast error is $\widetilde{\bm{z}}^{c} - \bm{z} = \bm{S}\bm{G}(\widehat{\bm{z}} - \bm{K}\bm{z}) = \bm{S}\bm{G}\widehat{\bm{\varepsilon}}$, with covariance $\bm{S}\bm{G}\bm{W}\bm{G}^{\top}\bm{S}^{\top}$. Minimizing its trace under the constraint, consider the Lagrangian
\begin{equation*}
\mathcal{L}(\bm{G},\bm{\Lambda}) = \mathrm{tr}\big(\bm{S}\bm{G}\bm{W}\bm{G}^{\top}\bm{S}^{\top}\big) - 2\mathrm{tr}\big[\bm{\Lambda}^{\top}\big(\bm{G}\bm{S}_Q - \bm{I}_m\big)\big], \qquad \bm{\Lambda} \in \mathbb{R}^{m \times m},
\end{equation*}
whose first-order conditions are $\bm{W}\bm{G}^{\top}\bm{S}^{\top}\bm{S} = \bm{S}_Q\bm{\Lambda}$ and $\bm{G}\bm{S}_Q = \bm{I}_m$. Pre-multiplying the first condition by $\bm{S}_Q^{\top}\bm{W}^{-1} = \bm{S}^{\top}\bm{K}^{\top}\bm{W}^{-1}$ and using $\bm{K}^{\top}\bm{W}^{-1}\bm{K} = \bm{W}_c^{-1}$ together with $\bm{S}_Q^{\top}\bm{G}^{\top}\bm{S}^{\top}\bm{S} = \bm{S}^{\top}\bm{S}$ from the constraint, gives $\bm{\Lambda} = \big(\bm{S}^{\top}\bm{W}_c^{-1}\bm{S}\big)^{-1}\bm{S}^{\top}\bm{S}$. Substituting back, the factor $\bm{S}^{\top}\bm{S}$ cancels and
\begin{equation*}
\bm{G} = \big(\bm{S}^{\top}\bm{W}_c^{-1}\bm{S}\big)^{-1}\bm{S}^{\top}\bm{K}^{\top}\bm{W}^{-1} = \big(\bm{S}_Q^{\top}\bm{W}^{-1}\bm{S}_Q\big)^{-1}\bm{S}_Q^{\top}\bm{W}^{-1},
\end{equation*}
so that $\widetilde{\bm{z}}^{c} = \bm{S}\bm{G}\widehat{\bm{z}}$ is the structural form~\eqref{eq:occ}.

\subsection*{Zero-constrained form (\ref{eq:occproj})}
\noindent Following the zero-constrained approach, $\widetilde{\bm{z}}^{c}$ solves the linearly constrained quadratic program
\begin{equation*}
\widetilde{\bm{z}}^{c} = \arg\min_{\bm{z}}\ \big(\widehat{\bm{z}} - \bm{K}\bm{z}\big)^{\top}\bm{W}^{-1}\big(\widehat{\bm{z}} - \bm{K}\bm{z}\big) \quad \text{s.t.}\quad \bm{C}\bm{z} = \bm{0}.
\end{equation*}
With multipliers $\bm{\lambda}$, the Lagrangian $\big(\widehat{\bm{z}} - \bm{K}\bm{z}\big)^{\top}\bm{W}^{-1}\big(\widehat{\bm{z}} - \bm{K}\bm{z}\big) + 2\bm{\lambda}^{\top}\bm{C}\bm{z}$ yields the first-order system
\begin{equation*}
\begin{bmatrix} \bm{K}^{\top}\bm{W}^{-1}\bm{K} & \bm{C}^{\top} \\ \bm{C} & \bm{0} \end{bmatrix}
\begin{bmatrix} \widetilde{\bm{z}}^{c} \\ \widetilde{\bm{\lambda}} \end{bmatrix}
=
\begin{bmatrix} \bm{K}^{\top}\bm{W}^{-1}\widehat{\bm{z}} \\ \bm{0} \end{bmatrix}.
\end{equation*}
Solving the block system gives
\begin{equation*}
\widetilde{\bm{z}}^{c} = \Big[\bm{W}_c - \bm{W}_c\bm{C}^{\top}\big(\bm{C}\bm{W}_c\bm{C}^{\top}\big)^{-1}\bm{C}\bm{W}_c\Big]\bm{K}^{\top}\bm{W}^{-1}\widehat{\bm{z}} = \Big[\bm{I}_n - \bm{W}_c\bm{C}^{\top}\big(\bm{C}\bm{W}_c\bm{C}^{\top}\big)^{-1}\bm{C}\Big]\bm{W}_c\bm{K}^{\top}\bm{W}^{-1}\widehat{\bm{z}},
\end{equation*}
which is the zero-constrained form~\eqref{eq:occproj}. Setting $\bm{M} = \bm{I}_n - \bm{W}_c\bm{C}^{\top}\big(\bm{C}\bm{W}_c\bm{C}^{\top}\big)^{-1}\bm{C}$ and $\bm{\Omega} = \bm{W}^{-1}\bm{K}\bm{W}_c$, it reads $\widetilde{\bm{z}}^{c} = \bm{M}\bm{\Omega}^{\top}\widehat{\bm{z}}$, the temporal restatement of Theorem~1 of \citet{Girolimetto2024-qe}.

\subsection*{Equivalence of the two forms}
\noindent The zero-constrained model underlying the second derivation is $\widehat{\bm{z}} = \bm{K}\bm{z} + \widehat{\bm{\varepsilon}}$, subject to $\bm{C}\bm{z} = \bm{0}$. The constraint is equivalent to $\bm{z} = \bm{S}\bm{y}$ and, since $\bm{C}\bm{S} = \bm{0}$, substituting it removes the restriction and produces the unrestricted model $\widehat{\bm{z}} = \bm{K}\bm{S}\bm{y} + \widehat{\bm{\varepsilon}} = \bm{S}_Q\bm{y} + \widehat{\bm{\varepsilon}}$, which is exactly the structural model of the first derivation. The minimum mean square error estimate of $\bm{z}$ is therefore the same under the two parametrizations.

\begingroup
\phantomsection\addcontentsline{toc}{section}{References}
\bibliographystyle{apalike3link}
\bibliography{biblio.bib}
\endgroup
\end{document}


\maketitle
\tableofcontents
%
%
%
%

\clearpage
\section{MCB test at all temporal levels} \label{sec:oa-mcb-levels}

This section extends the MCB Nemenyi test reported in the main paper for the hourly and daily levels to all eight temporal levels of the hierarchy. Figures~\ref{fig:oa-mcb-ae} and~\ref{fig:oa-mcb-se} display the average ranks computed on the absolute and squared errors, respectively, comparing the base, reconciled and coherent combined forecasts in the German and Spanish markets.

\begin{figure}[H]
\centering
\includegraphics[width = \linewidth]{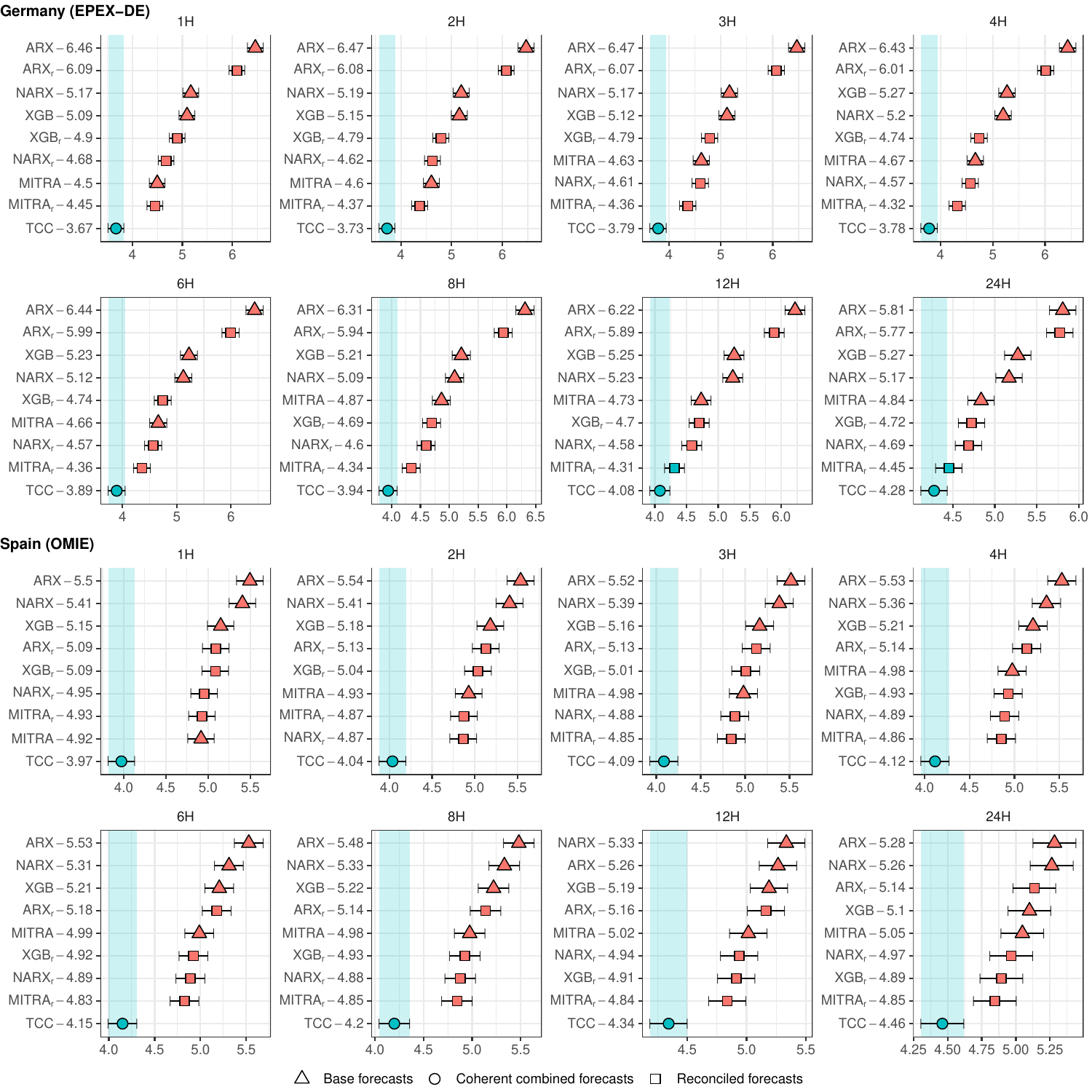}
\caption{Average ranks and MCB Nemenyi test intervals, computed on the absolute errors over the test days, at all temporal levels for the German (top) and Spanish (bottom) markets. TCC denotes the temporal coherent combination ($shr_{be}$ estimator) and the subscript $r$ the reconciled forecasts ($shr$ estimator). The shaded area marks the confidence region of the best-ranked approach: approaches whose intervals do not overlap it perform significantly worse.}
\label{fig:oa-mcb-ae}
\end{figure}

\begin{figure}[H]
\centering
\includegraphics[width = \linewidth]{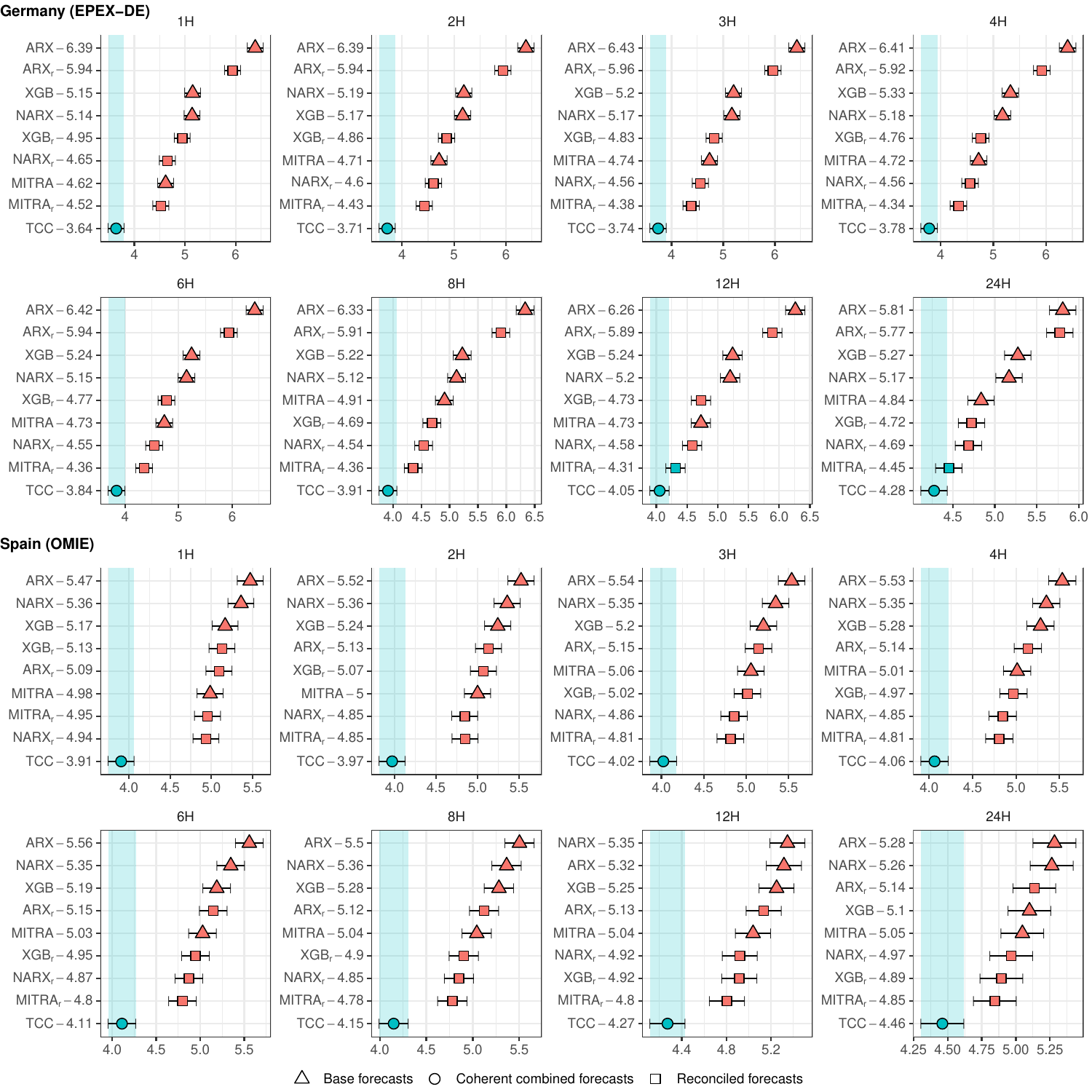}
\caption{Average ranks and MCB Nemenyi test intervals, computed on the squared errors over the test days, at all temporal levels for the German (top) and Spanish (bottom) markets. Notation as in Figure~\ref{fig:oa-mcb-ae}.}
\label{fig:oa-mcb-se}
\end{figure}

\clearpage
\section{Robustness analysis: RMSE results} \label{sec:oa-rmse}

This section collects the RMSE counterparts of the MAE tables presented in the robustness analysis of the main paper. Table~\ref{tab:oa-covariance} compares the correlation structures, Table~\ref{tab:oa-shrinkage} the shrinkage techniques and Table~\ref{tab:oa-seq} the sequential approaches, leading to the same conclusions drawn from the MAE results.

\begin{table}[H]
\centering
\linespread{1}
\small
\setlength{\tabcolsep}{3pt}
\caption{RMSE at each temporal level for the base forecasts, the single-expert reconciled forecasts with the $wls$ and $shr$ estimators, and the temporal coherent combination with the $wls$, $shr$, $shr_{bn}$ and $shr_{be}$ estimators. RMSE counterpart of the corresponding MAE table in the main paper. For each market and temporal level, bold and italics denote the best and the second-best approach, respectively.}
\label{tab:oa-covariance}
\resizebox{\linewidth}{!}{
\input{./tables/tab_covariance_rmse.tex}}
\end{table}

\begin{table}[H]
\centering
\linespread{1}
\small
\setlength{\tabcolsep}{3pt}
\caption{RMSE of the temporal coherent combination for the linear ($shr$) and nonlinear ($lis$, $qis$, $gis$) shrinkage techniques, each applied to the full matrix (no subscript), to the by-node blocks (subscript $bn$) and to the by-expert blocks (subscript $be$). RMSE counterpart of the corresponding MAE table in the main paper. For each market and temporal level, bold and italics denote the best and the second-best estimator, respectively.}
\label{tab:oa-shrinkage}
\resizebox{\linewidth}{!}{\input{./tables/tab_shr_rmse.tex}}
\end{table}

\begin{table}[H]
\centering
\linespread{1}
\small
\setlength{\tabcolsep}{3pt}
\caption{RMSE at each temporal level of the sequential approaches, in which the equal-weighted average of the four experts ($sa$) is reconciled as a single expert with the $wls$ and $shr$ estimators, compared with the best single-expert reconciliation (MITRA base forecasts, $shr$ estimator) and the best temporal coherent combination ($shr_{be}$ estimator). RMSE counterpart of the corresponding MAE table in the main paper. For each market and temporal level, bold and italics denote the best and the second-best approach, respectively.}
\label{tab:oa-seq}
\resizebox{\linewidth}{!}{\input{./tables/tab_seq_rmse.tex}}
{\scriptsize \vskip0.25em\textbf{$^\ast$} Simple average combination is not coherent.}
\end{table}

\phantomsection\addcontentsline{toc}{section}{References}
\bibliographystyle{apalike3link}
\bibliography{biblio.bib}

%% file: tables/shrbe_tab_full_dm.tex
\begin{tabular}[t]{cc|>{}cccccccccc|>{}cccccccccc}
\toprule
\multicolumn{2}{c}{\textbf{ }} & \multicolumn{10}{c}{\textbf{Germany (EPEX-DE)}} & \multicolumn{10}{c}{\textbf{Spain (OMIE)}} \\
\cmidrule(l{3pt}r{3pt}){3-12} \cmidrule(l{3pt}r{3pt}){13-22}
\textbf{Level} & \textbf{Model} & MAE & $\text{MAE}_{r}$ & \% & $\text{MAE}_{c}$ & \% & RMSE & $\text{RMSE}_{r}$ & \% & $\text{RMSE}_{c}$ & \% & MAE & $\text{MAE}_{r}$ & \% & $\text{MAE}_{c}$ & \% & RMSE & $\text{RMSE}_{r}$ & \% & $\text{RMSE}_{c}$ & \%\\
\midrule
 & ARX & \;26.93$\;$ & \;26.08$\;$ & \;\cellcolor[HTML]{EEF8EE}{3.1}$\;$ &  & \;\cellcolor[HTML]{80C680}{\textbf{23.9}}$\;$ & \;39.83$\;$ & \;38.56$\;$ & \;\cellcolor[HTML]{EEF7EE}{3.2}$\;$ &  & \;\cellcolor[HTML]{99D199}{\textbf{19.3}}$\;$ & \;16.72$\;$ & \;16.52$\;$ & \;\cellcolor[HTML]{F9FCF9}{1.1}$\;$ &  & \;\cellcolor[HTML]{CAE7CA}{\textbf{10.1}}$\;$ & \;24.40$\;$ & \;24.28$\;$ & \;\cellcolor[HTML]{FCFEFC}{0.5}$\,^{1}$ &  & \;\cellcolor[HTML]{D3EBD3}{\textbf{8.4}}$\;$\\

 & NARX & \;23.06$\;$ & \;22.27$\;$ & \;\cellcolor[HTML]{EDF7ED}{3.4}$\;$ &  & \;\cellcolor[HTML]{C4E5C4}{\textbf{11.1}}$\;$ & \;35.62$\;$ & \;34.56$\;$ & \;\cellcolor[HTML]{EFF8EF}{3.0}$\;$ &  & \;\cellcolor[HTML]{CBE8CB}{\textbf{9.8}}$\;$ & \;16.57$\;$ & \;16.34$\;$ & \;\cellcolor[HTML]{F7FCF7}{1.4}$\;$ &  & \;\cellcolor[HTML]{CEE9CE}{\textbf{9.3}}$\;$ & \;24.07$\;$ & \;24.02$\;$ & \;\cellcolor[HTML]{FEFEFE}{0.2}$\,^{1}$ &  & \;\cellcolor[HTML]{D9EED9}{\textbf{7.1}}$\;$\\

 & XGB & \;23.33$\;$ & \;22.36$\;$ & \;\cellcolor[HTML]{E9F5E9}{4.2}$\;$ &  & \;\cellcolor[HTML]{BFE2BF}{\textbf{12.2}}$\;$ & \;37.05$\;$ & \;35.13$\;$ & \;\cellcolor[HTML]{E4F3E4}{5.2}$\;$ &  & \;\cellcolor[HTML]{B9DFB9}{\textbf{13.2}}$\;$ & \;16.54$\;$ & \;16.00$\;$ & \;\cellcolor[HTML]{EEF7EE}{3.2}$\;$ &  & \;\cellcolor[HTML]{CFE9CF}{\textbf{9.1}}$\;$ & \;24.57$\;$ & \;23.70$\;$ & \;\cellcolor[HTML]{ECF7EC}{3.5}$\;$ &  & \;\cellcolor[HTML]{CFEACF}{\textbf{9.0}}$\;$\\

\multirow{-4}{*}{\centering\arraybackslash 1H} & MITRA & \;21.37$\;$ & \;21.08$\;$ & \;\cellcolor[HTML]{F8FCF8}{1.4}$\;$ & \multirow{-4}{*}{\centering\arraybackslash \;20.49$\;$} & \;\cellcolor[HTML]{E9F5E9}{\textbf{4.1}}$\;$ & \;33.31$\;$ & \;32.84$\;$ & \;\cellcolor[HTML]{F7FCF7}{1.4}$\;$ & \multirow{-4}{*}{\centering\arraybackslash \;32.14$\;$} & \;\cellcolor[HTML]{ECF7EC}{\textbf{3.5}}$\;$ & \;15.92$\;$ & \;15.88$\;$ & \;\cellcolor[HTML]{FEFEFE}{0.3}$\,^{1}$ & \multirow{-4}{*}{\centering\arraybackslash \;15.04$\;$} & \;\cellcolor[HTML]{E1F2E1}{\textbf{5.6}}$\;$ & \;23.61$\;$ & \;23.25$\;$ & \;\cellcolor[HTML]{F7FBF7}{1.5}$\,^{1}$ & \multirow{-4}{*}{\centering\arraybackslash \;22.36$\;$} & \;\cellcolor[HTML]{E3F2E3}{\textbf{5.3}}$\;$\\

\midrule
 & ARX & \;26.44$\;$ & \;25.60$\;$ & \;\cellcolor[HTML]{EEF7EE}{3.2}$\;$ &  & \;\cellcolor[HTML]{7DC47D}{\textbf{24.6}}$\;$ & \;38.95$\;$ & \;37.66$\;$ & \;\cellcolor[HTML]{EEF7EE}{3.3}$\;$ &  & \;\cellcolor[HTML]{94CF94}{\textbf{20.1}}$\;$ & \;16.41$\;$ & \;16.20$\;$ & \;\cellcolor[HTML]{F8FCF8}{1.2}$\;$ &  & \;\cellcolor[HTML]{C7E6C7}{\textbf{10.5}}$\;$ & \;23.96$\;$ & \;23.82$\;$ & \;\cellcolor[HTML]{FCFEFC}{0.6}$\,^{1}$ &  & \;\cellcolor[HTML]{D1EAD1}{\textbf{8.8}}$\;$\\

 & NARX & \;22.64$\;$ & \;21.59$\;$ & \;\cellcolor[HTML]{E7F4E7}{4.6}$\;$ &  & \;\cellcolor[HTML]{C0E3C0}{\textbf{11.9}}$\;$ & \;34.84$\;$ & \;33.37$\;$ & \;\cellcolor[HTML]{E9F5E9}{4.2}$\;$ &  & \;\cellcolor[HTML]{C6E5C6}{\textbf{10.7}}$\;$ & \;16.29$\;$ & \;15.89$\;$ & \;\cellcolor[HTML]{F2F9F2}{2.4}$\;$ &  & \;\cellcolor[HTML]{CBE7CB}{\textbf{9.9}}$\;$ & \;23.71$\;$ & \;23.39$\;$ & \;\cellcolor[HTML]{F8FCF8}{1.4}$\,^{1}$ &  & \;\cellcolor[HTML]{D6ECD6}{\textbf{7.8}}$\;$\\

 & XGB & \;22.74$\;$ & \;21.72$\;$ & \;\cellcolor[HTML]{E7F4E7}{4.5}$\;$ &  & \;\cellcolor[HTML]{BEE2BE}{\textbf{12.4}}$\;$ & \;35.87$\;$ & \;33.99$\;$ & \;\cellcolor[HTML]{E3F2E3}{5.3}$\;$ &  & \;\cellcolor[HTML]{B8DFB8}{\textbf{13.3}}$\;$ & \;16.19$\;$ & \;15.59$\;$ & \;\cellcolor[HTML]{EBF6EB}{3.7}$\;$ &  & \;\cellcolor[HTML]{CEE9CE}{\textbf{9.3}}$\;$ & \;24.11$\;$ & \;23.13$\;$ & \;\cellcolor[HTML]{E9F5E9}{4.1}$\;$ &  & \;\cellcolor[HTML]{CEE9CE}{\textbf{9.3}}$\;$\\

\multirow{-4}{*}{\centering\arraybackslash 2H} & MITRA & \;20.92$\;$ & \;20.34$\;$ & \;\cellcolor[HTML]{F0F8F0}{2.8}$\,^{1}$ & \multirow{-4}{*}{\centering\arraybackslash \;19.94$\;$} & \;\cellcolor[HTML]{E6F4E6}{\textbf{4.7}}$\;$ & \;32.43$\;$ & \;31.52$\;$ & \;\cellcolor[HTML]{F0F8F0}{2.8}$\,^{1}$ & \multirow{-4}{*}{\centering\arraybackslash \;31.10$\;$} & \;\cellcolor[HTML]{E9F5E9}{\textbf{4.1}}$\;$ & \;15.56$\;$ & \;15.43$\;$ & \;\cellcolor[HTML]{FBFDFB}{0.8}$\,^{1}$ & \multirow{-4}{*}{\centering\arraybackslash \;14.68$\;$} & \;\cellcolor[HTML]{E1F2E1}{\textbf{5.7}}$\;$ & \;23.17$\;$ & \;22.59$\;$ & \;\cellcolor[HTML]{F2F9F2}{2.5}$\,^{1}$ & \multirow{-4}{*}{\centering\arraybackslash \;21.86$\;$} & \;\cellcolor[HTML]{E1F2E1}{\textbf{5.6}}$\;$\\

\midrule
 & ARX & \;26.11$\;$ & \;25.26$\;$ & \;\cellcolor[HTML]{EEF7EE}{3.2}$\;$ &  & \;\cellcolor[HTML]{7AC37A}{\textbf{25.1}}$\;$ & \;38.28$\;$ & \;36.98$\;$ & \;\cellcolor[HTML]{EDF7ED}{3.4}$\;$ &  & \;\cellcolor[HTML]{91CD91}{\textbf{20.8}}$\;$ & \;16.16$\;$ & \;15.96$\;$ & \;\cellcolor[HTML]{F8FCF8}{1.3}$\;$ &  & \;\cellcolor[HTML]{C6E5C6}{\textbf{10.8}}$\;$ & \;23.59$\;$ & \;23.43$\;$ & \;\cellcolor[HTML]{FBFDFB}{0.7}$\,^{1}$ &  & \;\cellcolor[HTML]{CFEACF}{\textbf{9.0}}$\;$\\

 & NARX & \;22.12$\;$ & \;21.14$\;$ & \;\cellcolor[HTML]{E7F4E7}{4.4}$\;$ &  & \;\cellcolor[HTML]{C1E3C1}{\textbf{11.7}}$\;$ & \;33.89$\;$ & \;32.55$\;$ & \;\cellcolor[HTML]{EAF6EA}{4.0}$\;$ &  & \;\cellcolor[HTML]{C7E6C7}{\textbf{10.5}}$\;$ & \;16.01$\;$ & \;15.63$\;$ & \;\cellcolor[HTML]{F3F9F3}{2.3}$\;$ &  & \;\cellcolor[HTML]{CAE7CA}{\textbf{9.9}}$\;$ & \;23.44$\;$ & \;22.96$\;$ & \;\cellcolor[HTML]{F4FAF4}{2.1}$\,^{1}$ &  & \;\cellcolor[HTML]{D2EBD2}{\textbf{8.4}}$\;$\\

 & XGB & \;22.26$\;$ & \;21.30$\;$ & \;\cellcolor[HTML]{E8F5E8}{4.3}$\;$ &  & \;\cellcolor[HTML]{BEE2BE}{\textbf{12.2}}$\;$ & \;35.10$\;$ & \;33.20$\;$ & \;\cellcolor[HTML]{E2F2E2}{5.4}$\;$ &  & \;\cellcolor[HTML]{B7DFB7}{\textbf{13.6}}$\;$ & \;15.93$\;$ & \;15.29$\;$ & \;\cellcolor[HTML]{EAF5EA}{4.0}$\;$ &  & \;\cellcolor[HTML]{CDE8CD}{\textbf{9.5}}$\;$ & \;23.75$\;$ & \;22.71$\;$ & \;\cellcolor[HTML]{E8F5E8}{4.4}$\;$ &  & \;\cellcolor[HTML]{CCE8CC}{\textbf{9.6}}$\;$\\

\multirow{-4}{*}{\centering\arraybackslash 3H} & MITRA & \;20.61$\;$ & \;19.86$\;$ & \;\cellcolor[HTML]{ECF6EC}{3.7}$\,^{1}$ & \multirow{-4}{*}{\centering\arraybackslash \;19.54$\;$} & \;\cellcolor[HTML]{E3F3E3}{\textbf{5.2}}$\;$ & \;31.87$\;$ & \;30.61$\;$ & \;\cellcolor[HTML]{EAF6EA}{3.9}$\,^{1}$ & \multirow{-4}{*}{\centering\arraybackslash \;30.33$\;$} & \;\cellcolor[HTML]{E6F4E6}{\textbf{4.8}}$\;$ & \;15.31$\;$ & \;15.14$\;$ & \;\cellcolor[HTML]{F9FCF9}{1.1}$\,^{1}$ & \multirow{-4}{*}{\centering\arraybackslash \;14.42$\;$} & \;\cellcolor[HTML]{E0F1E0}{\textbf{5.8}}$\;$ & \;22.72$\;$ & \;22.14$\;$ & \;\cellcolor[HTML]{F1F9F1}{2.6}$\,^{1}$ & \multirow{-4}{*}{\centering\arraybackslash \;21.47$\;$} & \;\cellcolor[HTML]{E2F2E2}{\textbf{5.5}}$\;$\\

\midrule
 & ARX & \;25.65$\;$ & \;24.76$\;$ & \;\cellcolor[HTML]{EDF7ED}{3.4}$\;$ &  & \;\cellcolor[HTML]{77C277}{\textbf{25.7}}$\;$ & \;37.50$\;$ & \;36.13$\;$ & \;\cellcolor[HTML]{ECF6EC}{3.6}$\;$ &  & \;\cellcolor[HTML]{8ECC8E}{\textbf{21.3}}$\;$ & \;15.88$\;$ & \;15.67$\;$ & \;\cellcolor[HTML]{F8FCF8}{1.3}$\,^{1}$ &  & \;\cellcolor[HTML]{C5E5C5}{\textbf{11.0}}$\;$ & \;23.25$\;$ & \;23.07$\;$ & \;\cellcolor[HTML]{FBFDFB}{0.8}$\,^{1}$ &  & \;\cellcolor[HTML]{CEE9CE}{\textbf{9.2}}$\;$\\

 & NARX & \;21.75$\;$ & \;20.60$\;$ & \;\cellcolor[HTML]{E3F2E3}{5.3}$\;$ &  & \;\cellcolor[HTML]{BDE1BD}{\textbf{12.4}}$\;$ & \;33.27$\;$ & \;31.67$\;$ & \;\cellcolor[HTML]{E5F4E5}{4.8}$\;$ &  & \;\cellcolor[HTML]{C3E4C3}{\textbf{11.3}}$\;$ & \;15.68$\;$ & \;15.36$\;$ & \;\cellcolor[HTML]{F4FAF4}{2.1}$\,^{1}$ &  & \;\cellcolor[HTML]{CBE7CB}{\textbf{9.9}}$\;$ & \;22.96$\;$ & \;22.57$\;$ & \;\cellcolor[HTML]{F6FBF6}{1.7}$\,^{1}$ &  & \;\cellcolor[HTML]{D4ECD4}{\textbf{8.1}}$\;$\\

 & XGB & \;22.01$\;$ & \;20.78$\;$ & \;\cellcolor[HTML]{E1F2E1}{5.6}$\;$ &  & \;\cellcolor[HTML]{B8DFB8}{\textbf{13.4}}$\;$ & \;34.24$\;$ & \;32.29$\;$ & \;\cellcolor[HTML]{E1F1E1}{5.7}$\;$ &  & \;\cellcolor[HTML]{B6DEB6}{\textbf{13.8}}$\;$ & \;15.78$\;$ & \;14.97$\;$ & \;\cellcolor[HTML]{E4F3E4}{5.2}$\;$ &  & \;\cellcolor[HTML]{C8E6C8}{\textbf{10.5}}$\;$ & \;23.60$\;$ & \;22.29$\;$ & \;\cellcolor[HTML]{E2F2E2}{5.5}$\;$ &  & \;\cellcolor[HTML]{C7E6C7}{\textbf{10.5}}$\,^{2}$\\

\multirow{-4}{*}{\centering\arraybackslash 4H} & MITRA & \;20.13$\;$ & \;19.34$\;$ & \;\cellcolor[HTML]{EAF6EA}{3.9}$\,^{1}$ & \multirow{-4}{*}{\centering\arraybackslash \;19.06$\;$} & \;\cellcolor[HTML]{E3F2E3}{\textbf{5.3}}$\;$ & \;30.86$\;$ & \;29.70$\;$ & \;\cellcolor[HTML]{EBF6EB}{3.8}$\,^{1}$ & \multirow{-4}{*}{\centering\arraybackslash \;29.51$\;$} & \;\cellcolor[HTML]{E8F5E8}{\textbf{4.4}}$\;$ & \;15.01$\;$ & \;14.84$\;$ & \;\cellcolor[HTML]{F9FCF9}{1.1}$\,^{1}$ & \multirow{-4}{*}{\centering\arraybackslash \;14.13$\;$} & \;\cellcolor[HTML]{E0F1E0}{\textbf{5.8}}$\;$ & \;22.19$\;$ & \;21.75$\;$ & \;\cellcolor[HTML]{F4FAF4}{2.0}$\,^{1}$ & \multirow{-4}{*}{\centering\arraybackslash \;21.11$\;$} & \;\cellcolor[HTML]{E5F3E5}{\textbf{4.9}}$\;$\\

\midrule
 & ARX & \;25.14$\;$ & \;24.28$\;$ & \;\cellcolor[HTML]{EDF7ED}{3.4}$\;$ &  & \;\cellcolor[HTML]{74C174}{\textbf{26.2}}$\;$ & \;36.42$\;$ & \;35.12$\;$ & \;\cellcolor[HTML]{ECF7EC}{3.6}$\;$ &  & \;\cellcolor[HTML]{8ACA8A}{\textbf{22.2}}$\;$ & \;15.60$\;$ & \;15.39$\;$ & \;\cellcolor[HTML]{F8FCF8}{1.3}$\,^{1}$ &  & \;\cellcolor[HTML]{C1E3C1}{\textbf{11.7}}$\;$ & \;22.77$\;$ & \;22.55$\;$ & \;\cellcolor[HTML]{FAFDFA}{1.0}$\,^{1}$ &  & \;\cellcolor[HTML]{CDE8CD}{\textbf{9.5}}$\;$\\

 & NARX & \;21.08$\;$ & \;19.94$\;$ & \;\cellcolor[HTML]{E2F2E2}{5.4}$\;$ &  & \;\cellcolor[HTML]{C0E2C0}{\textbf{12.0}}$\;$ & \;32.00$\;$ & \;30.42$\;$ & \;\cellcolor[HTML]{E5F3E5}{4.9}$\,^{1}$ &  & \;\cellcolor[HTML]{C2E4C2}{\textbf{11.4}}$\;$ & \;15.27$\;$ & \;14.95$\;$ & \;\cellcolor[HTML]{F4FAF4}{2.1}$\,^{1}$ &  & \;\cellcolor[HTML]{CBE8CB}{\textbf{9.8}}$\;$ & \;22.41$\;$ & \;22.02$\;$ & \;\cellcolor[HTML]{F6FBF6}{1.8}$\,^{1}$ &  & \;\cellcolor[HTML]{D4ECD4}{\textbf{8.1}}$\;$\\

 & XGB & \;21.52$\;$ & \;20.20$\;$ & \;\cellcolor[HTML]{DEF0DE}{6.2}$\;$ &  & \;\cellcolor[HTML]{B6DEB6}{\textbf{13.7}}$\;$ & \;33.38$\;$ & \;31.15$\;$ & \;\cellcolor[HTML]{DCEFDC}{6.7}$\;$ &  & \;\cellcolor[HTML]{AFDBAF}{\textbf{15.1}}$\;$ & \;15.41$\;$ & \;14.61$\;$ & \;\cellcolor[HTML]{E4F3E4}{5.2}$\;$ &  & \;\cellcolor[HTML]{C7E6C7}{\textbf{10.6}}$\;$ & \;22.93$\;$ & \;21.77$\;$ & \;\cellcolor[HTML]{E4F3E4}{5.0}$\;$ &  & \;\cellcolor[HTML]{C9E7C9}{\textbf{10.1}}$\,^{2}$\\

\multirow{-4}{*}{\centering\arraybackslash 6H} & MITRA & \;19.51$\;$ & \;18.71$\;$ & \;\cellcolor[HTML]{E9F5E9}{4.1}$\,^{1}$ & \multirow{-4}{*}{\centering\arraybackslash \;18.56$\;$} & \;\cellcolor[HTML]{E5F3E5}{\textbf{4.8}}$\;$ & \;29.73$\;$ & \;28.38$\;$ & \;\cellcolor[HTML]{E7F4E7}{4.6}$\,^{1}$ & \multirow{-4}{*}{\centering\arraybackslash \;28.34$\;$} & \;\cellcolor[HTML]{E6F4E6}{\textbf{4.7}}$\;$ & \;14.77$\;$ & \;14.44$\;$ & \;\cellcolor[HTML]{F3FAF3}{2.2}$\,^{1}$ & \multirow{-4}{*}{\centering\arraybackslash \;13.78$\;$} & \;\cellcolor[HTML]{DBEFDB}{\textbf{6.7}}$\;$ & \;21.87$\;$ & \;21.18$\;$ & \;\cellcolor[HTML]{EEF8EE}{3.1}$\,^{1}$ & \multirow{-4}{*}{\centering\arraybackslash \;20.60$\;$} & \;\cellcolor[HTML]{E0F1E0}{\textbf{5.8}}$\,^{2}$\\

\midrule
 & ARX & \;24.37$\;$ & \;23.59$\;$ & \;\cellcolor[HTML]{EEF7EE}{3.2}$\;$ &  & \;\cellcolor[HTML]{76C176}{\textbf{25.8}}$\;$ & \;35.59$\;$ & \;34.15$\;$ & \;\cellcolor[HTML]{EAF5EA}{4.0}$\;$ &  & \;\cellcolor[HTML]{89CA89}{\textbf{22.3}}$\;$ & \;15.14$\;$ & \;14.95$\;$ & \;\cellcolor[HTML]{F8FCF8}{1.3}$\,^{1}$ &  & \;\cellcolor[HTML]{C2E4C2}{\textbf{11.4}}$\;$ & \;22.12$\;$ & \;21.90$\;$ & \;\cellcolor[HTML]{FAFDFA}{1.0}$\,^{1}$ &  & \;\cellcolor[HTML]{CCE8CC}{\textbf{9.6}}$\;$\\

 & NARX & \;20.54$\;$ & \;19.39$\;$ & \;\cellcolor[HTML]{E1F2E1}{5.6}$\;$ &  & \;\cellcolor[HTML]{C0E2C0}{\textbf{12.0}}$\;$ & \;31.20$\;$ & \;29.61$\;$ & \;\cellcolor[HTML]{E4F3E4}{5.1}$\,^{1}$ &  & \;\cellcolor[HTML]{C2E4C2}{\textbf{11.4}}$\;$ & \;15.00$\;$ & \;14.61$\;$ & \;\cellcolor[HTML]{F1F9F1}{2.6}$\,^{1}$ &  & \;\cellcolor[HTML]{C7E6C7}{\textbf{10.6}}$\;$ & \;21.92$\;$ & \;21.42$\;$ & \;\cellcolor[HTML]{F3FAF3}{2.3}$\,^{1}$ &  & \;\cellcolor[HTML]{D1EAD1}{\textbf{8.7}}$\;$\\

 & XGB & \;21.06$\;$ & \;19.68$\;$ & \;\cellcolor[HTML]{DCEFDC}{6.6}$\;$ &  & \;\cellcolor[HTML]{B4DDB4}{\textbf{14.2}}$\;$ & \;32.94$\;$ & \;30.34$\;$ & \;\cellcolor[HTML]{D5ECD5}{7.9}$\;$ &  & \;\cellcolor[HTML]{AAD9AA}{\textbf{16.1}}$\,^{2}$ & \;15.10$\;$ & \;14.20$\;$ & \;\cellcolor[HTML]{DFF1DF}{6.0}$\;$ &  & \;\cellcolor[HTML]{C3E4C3}{\textbf{11.2}}$\;$ & \;22.50$\;$ & \;21.12$\;$ & \;\cellcolor[HTML]{DEF0DE}{6.2}$\;$ &  & \;\cellcolor[HTML]{C4E5C4}{\textbf{11.1}}$\,^{2}$\\

\multirow{-4}{*}{\centering\arraybackslash 8H} & MITRA & \;19.45$\;$ & \;18.18$\;$ & \;\cellcolor[HTML]{DDF0DD}{6.5}$\,^{1}$ & \multirow{-4}{*}{\centering\arraybackslash \;18.08$\;$} & \;\cellcolor[HTML]{DAEEDA}{\textbf{7.0}}$\;$ & \;29.34$\;$ & \;27.63$\;$ & \;\cellcolor[HTML]{E0F1E0}{\textbf{5.8}}$\,^{1}$ & \multirow{-4}{*}{\centering\arraybackslash \;27.64$\;$} & \;\cellcolor[HTML]{E0F1E0}{5.8}$\;$ & \;14.44$\;$ & \;14.06$\;$ & \;\cellcolor[HTML]{F1F9F1}{2.6}$\,^{1}$ & \multirow{-4}{*}{\centering\arraybackslash \;13.41$\;$} & \;\cellcolor[HTML]{D9EED9}{\textbf{7.1}}$\;$ & \;21.45$\;$ & \;20.53$\;$ & \;\cellcolor[HTML]{E8F5E8}{4.3}$\,^{1}$ & \multirow{-4}{*}{\centering\arraybackslash \;20.01$\;$} & \;\cellcolor[HTML]{DBEFDB}{\textbf{6.7}}$\,^{2}$\\

\midrule
 & ARX & \;22.81$\;$ & \;22.10$\;$ & \;\cellcolor[HTML]{EFF8EF}{3.1}$\;$ &  & \;\cellcolor[HTML]{6EBE6E}{\textbf{27.4}}$\;$ & \;32.68$\;$ & \;31.64$\;$ & \;\cellcolor[HTML]{EEF7EE}{3.2}$\;$ &  & \;\cellcolor[HTML]{83C783}{\textbf{23.4}}$\;$ & \;14.27$\;$ & \;14.19$\;$ & \;\cellcolor[HTML]{FCFEFC}{0.6}$\,^{1}$ &  & \;\cellcolor[HTML]{C7E6C7}{\textbf{10.5}}$\;$ & \;20.83$\;$ & \;20.68$\;$ & \;\cellcolor[HTML]{FBFDFB}{0.7}$\,^{1}$ &  & \;\cellcolor[HTML]{CFE9CF}{\textbf{9.0}}$\,^{2}$\\

 & NARX & \;19.37$\;$ & \;17.72$\;$ & \;\cellcolor[HTML]{D2EBD2}{8.5}$\;$ &  & \;\cellcolor[HTML]{B2DCB2}{\textbf{14.5}}$\;$ & \;28.95$\;$ & \;26.94$\;$ & \;\cellcolor[HTML]{DAEEDA}{7.0}$\;$ &  & \;\cellcolor[HTML]{B7DFB7}{\textbf{13.5}}$\;$ & \;14.25$\;$ & \;13.87$\;$ & \;\cellcolor[HTML]{F1F9F1}{2.7}$\,^{1}$ &  & \;\cellcolor[HTML]{C8E6C8}{\textbf{10.4}}$\;$ & \;20.64$\;$ & \;20.28$\;$ & \;\cellcolor[HTML]{F6FBF6}{1.7}$\,^{1}$ &  & \;\cellcolor[HTML]{D4ECD4}{\textbf{8.2}}$\,^{2}$\\

 & XGB & \;19.70$\;$ & \;17.98$\;$ & \;\cellcolor[HTML]{D1EAD1}{8.7}$\;$ &  & \;\cellcolor[HTML]{ABD9AB}{\textbf{15.9}}$\,^{2}$ & \;30.55$\;$ & \;27.61$\;$ & \;\cellcolor[HTML]{CCE8CC}{9.6}$\;$ &  & \;\cellcolor[HTML]{A0D4A0}{\textbf{18.0}}$\,^{2}$ & \;14.29$\;$ & \;13.50$\;$ & \;\cellcolor[HTML]{E2F2E2}{5.5}$\;$ &  & \;\cellcolor[HTML]{C7E6C7}{\textbf{10.7}}$\,^{2}$ & \;21.24$\;$ & \;20.01$\;$ & \;\cellcolor[HTML]{E0F1E0}{5.8}$\,^{1}$ &  & \;\cellcolor[HTML]{C6E5C6}{\textbf{10.8}}$\,^{2}$\\

\multirow{-4}{*}{\centering\arraybackslash 12H} & MITRA & \;17.59$\;$ & \;16.50$\;$ & \;\cellcolor[HTML]{DEF0DE}{\textbf{6.2}}$\,^{1}$ & \multirow{-4}{*}{\centering\arraybackslash \;16.56$\;$} & \;\cellcolor[HTML]{E0F1E0}{5.8}$\,^{2}$ & \;26.27$\;$ & \;24.69$\;$ & \;\cellcolor[HTML]{DFF1DF}{\textbf{6.0}}$\,^{1}$ & \multirow{-4}{*}{\centering\arraybackslash \;25.04$\;$} & \;\cellcolor[HTML]{E6F4E6}{4.7}$\,^{2}$ & \;13.67$\;$ & \;13.34$\;$ & \;\cellcolor[HTML]{F2F9F2}{2.4}$\,^{1}$ & \multirow{-4}{*}{\centering\arraybackslash \;12.77$\;$} & \;\cellcolor[HTML]{DCEFDC}{\textbf{6.6}}$\;$ & \;20.16$\;$ & \;19.44$\;$ & \;\cellcolor[HTML]{ECF6EC}{3.6}$\,^{1}$ & \multirow{-4}{*}{\centering\arraybackslash \;18.95$\;$} & \;\cellcolor[HTML]{DFF1DF}{\textbf{6.0}}$\,^{2}$\\

\midrule
 & ARX & \;20.92$\;$ & \;20.52$\;$ & \;\cellcolor[HTML]{F5FAF5}{1.9}$\;$ &  & \;\cellcolor[HTML]{70BF70}{\textbf{26.9}}$\;$ & \;30.28$\;$ & \;29.15$\;$ & \;\cellcolor[HTML]{EBF6EB}{3.8}$\;$ &  & \;\cellcolor[HTML]{82C782}{\textbf{23.7}}$\;$ & \;13.57$\;$ & \;13.36$\;$ & \;\cellcolor[HTML]{F7FBF7}{1.5}$\,^{1}$ &  & \;\cellcolor[HTML]{BFE2BF}{\textbf{12.1}}$\;$ & \;19.50$\;$ & \;19.31$\;$ & \;\cellcolor[HTML]{FAFDFA}{1.0}$\,^{1}$ &  & \;\cellcolor[HTML]{CFE9CF}{\textbf{9.1}}$\,^{2}$\\

 & NARX & \;17.99$\;$ & \;16.41$\;$ & \;\cellcolor[HTML]{D0EAD0}{8.8}$\;$ &  & \;\cellcolor[HTML]{AFDBAF}{\textbf{15.1}}$\,^{2}$ & \;27.24$\;$ & \;24.90$\;$ & \;\cellcolor[HTML]{D1EBD1}{8.6}$\,^{1}$ &  & \;\cellcolor[HTML]{AFDBAF}{\textbf{15.1}}$\,^{2}$ & \;13.29$\;$ & \;13.00$\;$ & \;\cellcolor[HTML]{F4FAF4}{2.2}$\,^{1}$ &  & \;\cellcolor[HTML]{C8E6C8}{\textbf{10.3}}$\;$ & \;19.14$\;$ & \;19.01$\;$ & \;\cellcolor[HTML]{FBFDFB}{0.7}$\,^{1}$ &  & \;\cellcolor[HTML]{D8EDD8}{\textbf{7.4}}$\,^{2}$\\

 & XGB & \;18.80$\;$ & \;16.63$\;$ & \;\cellcolor[HTML]{C2E4C2}{11.5}$\;$ &  & \;\cellcolor[HTML]{9CD29C}{\textbf{18.7}}$\,^{2}$ & \;29.50$\;$ & \;25.60$\;$ & \;\cellcolor[HTML]{B9DFB9}{13.2}$\;$ &  & \;\cellcolor[HTML]{8CCB8C}{\textbf{21.6}}$\,^{2}$ & \;13.25$\;$ & \;12.57$\;$ & \;\cellcolor[HTML]{E4F3E4}{5.1}$\,^{1}$ &  & \;\cellcolor[HTML]{CAE7CA}{\textbf{10.0}}$\,^{2}$ & \;19.61$\;$ & \;18.66$\;$ & \;\cellcolor[HTML]{E5F3E5}{4.8}$\,^{1}$ &  & \;\cellcolor[HTML]{CCE8CC}{\textbf{9.6}}$\,^{2}$\\

\multirow{-4}{*}{\centering\arraybackslash 24H} & MITRA & \;16.56$\;$ & \;15.15$\;$ & \;\cellcolor[HTML]{D2EBD2}{\textbf{8.5}}$\,^{1}$ & \multirow{-4}{*}{\centering\arraybackslash \;15.28$\;$} & \;\cellcolor[HTML]{D6EDD6}{7.7}$\,^{2}$ & \;24.81$\;$ & \;22.56$\;$ & \;\cellcolor[HTML]{CFE9CF}{\textbf{9.1}}$\,^{1}$ & \multirow{-4}{*}{\centering\arraybackslash \;23.12$\;$} & \;\cellcolor[HTML]{DBEFDB}{6.8}$\,^{1,2}$ & \;12.91$\;$ & \;12.43$\;$ & \;\cellcolor[HTML]{EBF6EB}{3.7}$\,^{1}$ & \multirow{-4}{*}{\centering\arraybackslash \;11.92$\;$} & \;\cellcolor[HTML]{D7EDD7}{\textbf{7.6}}$\,^{2}$ & \;18.84$\;$ & \;18.17$\;$ & \;\cellcolor[HTML]{ECF7EC}{3.6}$\,^{1}$ & \multirow{-4}{*}{\centering\arraybackslash \;17.73$\;$} & \;\cellcolor[HTML]{E0F1E0}{\textbf{5.9}}$\,^{2}$\\
\bottomrule
\end{tabular}

%% file: tables/tab_covariance_mae.tex
\begin{tabular}[t]{>{}c|ccccccc>{}c|cccccccc}
\toprule
\multicolumn{1}{c}{\textbf{ }} & \multicolumn{8}{c}{\textbf{Germany (EPEX-DE)}} & \multicolumn{8}{c}{\textbf{Spain (OMIE)}} \\
\cmidrule(l{3pt}r{3pt}){2-9} \cmidrule(l{3pt}r{3pt}){10-17}
\textbf{App.} & 1H & 2H & 3H & 4H & 6H & 8H & 12H & 24H & 1H & 2H & 3H & 4H & 6H & 8H & 12H & 24H\\
\midrule
\addlinespace[0.3em]
\multicolumn{17}{l}{\textbf{ARX forecasts}}\\
$base$ & 26.93 & 26.44 & 26.11 & 25.65 & 25.14 & 24.37 & 22.81 & 20.92 & 16.72 & 16.41 & 16.16 & 15.88 & 15.60 & 15.14 & 14.27 & 13.57\\
$wls$ & 26.92 & 26.42 & 26.07 & 25.59 & 25.05 & 24.33 & 22.79 & 21.00 & 16.72 & 16.40 & 16.15 & 15.86 & 15.54 & 15.07 & 14.26 & 13.42\\
$shr$ & 26.08 & 25.60 & 25.26 & 24.76 & 24.28 & 23.59 & 22.10 & 20.52 & 16.52 & 16.20 & 15.96 & 15.67 & 15.39 & 14.95 & 14.19 & 13.36\\
\addlinespace[0.3em]
\multicolumn{17}{l}{\textbf{NARX forecasts}}\\
$base$ & 23.06 & 22.64 & 22.12 & 21.75 & 21.08 & 20.54 & 19.37 & 17.99 & 16.57 & 16.29 & 16.01 & 15.68 & 15.27 & 15.00 & 14.25 & 13.29\\
$wls$ & 22.99 & 22.29 & 21.88 & 21.37 & 20.79 & 20.16 & 18.66 & 17.37 & 16.49 & 16.06 & 15.79 & 15.49 & 15.13 & 14.77 & 14.06 & 13.19\\
$shr$ & 22.27 & 21.59 & 21.14 & 20.60 & 19.94 & 19.39 & 17.72 & 16.41 & 16.34 & 15.89 & 15.63 & 15.36 & 14.95 & 14.61 & 13.87 & 13.00\\
\addlinespace[0.3em]
\multicolumn{17}{l}{\textbf{XGB forecasts}}\\
$base$ & 23.33 & 22.74 & 22.26 & 22.01 & 21.52 & 21.06 & 19.70 & 18.80 & 16.54 & 16.19 & 15.93 & 15.78 & 15.41 & 15.10 & 14.29 & 13.25\\
$wls$ & 23.25 & 22.63 & 22.21 & 21.74 & 21.18 & 20.62 & 19.08 & 17.77 & 16.51 & 16.10 & 15.80 & 15.50 & 15.16 & 14.74 & 14.05 & 13.21\\
$shr$ & 22.36 & 21.72 & 21.30 & 20.78 & 20.20 & 19.68 & 17.98 & 16.63 & 16.00 & 15.59 & 15.29 & 14.97 & 14.61 & 14.20 & 13.50 & 12.57\\
\addlinespace[0.3em]
\multicolumn{17}{l}{\textbf{MITRA forecasts}}\\
$base$ & 21.37 & 20.92 & 20.61 & 20.13 & 19.51 & 19.45 & 17.59 & 16.56 & 15.92 & 15.56 & 15.31 & 15.01 & 14.77 & 14.44 & 13.67 & 12.91\\
$wls$ & 21.42 & 20.64 & 20.19 & 19.64 & 18.99 & 18.47 & 16.74 & 15.36 & 15.88 & 15.39 & 15.09 & 14.80 & 14.39 & 13.97 & 13.24 & 12.36\\
$shr$ & \em{21.08} & \em{20.34} & \em{19.86} & \em{19.34} & \em{18.71} & \em{18.18} & \textbf{16.50} & \textbf{15.15} & 15.88 & 15.43 & 15.14 & 14.84 & 14.44 & 14.06 & 13.34 & 12.43\\
\addlinespace[0.3em]
\multicolumn{17}{l}{\textbf{Temporal coherent combination}}\\
$wls$ & 21.28 & 20.72 & 20.32 & 19.86 & 19.32 & 18.81 & 17.28 & 16.04 & \em{15.36} & \em{15.01} & \em{14.74} & \em{14.46} & \em{14.13} & \em{13.76} & \em{13.09} & \em{12.28}\\
$shr$ & 22.14 & 21.53 & 21.11 & 20.60 & 20.00 & 19.56 & 17.82 & 16.56 & 16.15 & 15.77 & 15.46 & 15.14 & 14.75 & 14.32 & 13.61 & 12.70\\
$shr_{bn}$ & 22.30 & 21.56 & 21.09 & 20.54 & 19.96 & 19.39 & 17.67 & 16.42 & 16.76 & 16.26 & 15.91 & 15.61 & 15.25 & 14.77 & 14.04 & 13.17\\
$shr_{be}$ & \textbf{20.49} & \textbf{19.94} & \textbf{19.54} & \textbf{19.06} & \textbf{18.56} & \textbf{18.08} & \em{16.56} & \em{15.28} & \textbf{15.04} & \textbf{14.68} & \textbf{14.42} & \textbf{14.13} & \textbf{13.78} & \textbf{13.41} & \textbf{12.77} & \textbf{11.92}\\
\bottomrule
\end{tabular}

%% file: tables/tab_shr_mae.tex
\begin{tabular}[t]{>{}c|ccccccc>{}c|cccccccc}
\toprule
\multicolumn{1}{c}{\textbf{ }} & \multicolumn{8}{c}{\textbf{Germany (EPEX-DE)}} & \multicolumn{8}{c}{\textbf{Spain (OMIE)}} \\
\cmidrule(l{3pt}r{3pt}){2-9} \cmidrule(l{3pt}r{3pt}){10-17}
\textbf{App.} & 1H & 2H & 3H & 4H & 6H & 8H & 12H & 24H & 1H & 2H & 3H & 4H & 6H & 8H & 12H & 24H\\
\midrule
\addlinespace[0.3em]
\multicolumn{17}{l}{\textbf{Linear Shrinkage \citep{ledoit2004, Schafer2005-jh}}}\\
$shr$ & 22.14 & 21.53 & 21.11 & 20.60 & 20.00 & 19.56 & 17.82 & 16.56 & 16.15 & 15.77 & 15.46 & 15.14 & 14.75 & 14.32 & 13.61 & 12.70\\
$shr_{bn}$ & 22.30 & 21.56 & 21.09 & 20.54 & 19.96 & 19.39 & 17.67 & 16.42 & 16.76 & 16.26 & 15.91 & 15.61 & 15.25 & 14.77 & 14.04 & 13.17\\
$shr_{be}$ & \textbf{20.49} & \textbf{19.94} & \textbf{19.54} & \textbf{19.06} & \textbf{18.56} & \textbf{18.08} & \textbf{16.56} & \textbf{15.28} & \textbf{15.04} & \textbf{14.68} & \textbf{14.42} & \textbf{14.13} & \textbf{13.78} & \textbf{13.41} & \textbf{12.77} & \textbf{11.92}\\
\addlinespace[0.3em]
\multicolumn{17}{l}{\textbf{Quadratic-Inverse Shrinkage \citep{ledoit2022quadratic}}}\\
$qis$ & 22.84 & 22.15 & 21.65 & 21.08 & 20.47 & 19.95 & 18.23 & 16.76 & 16.69 & 16.25 & 15.93 & 15.57 & 15.15 & 14.71 & 13.92 & 13.02\\
$qis_{bn}$ & 22.44 & 21.69 & 21.22 & 20.66 & 20.07 & 19.49 & 17.76 & 16.51 & 16.89 & 16.36 & 16.02 & 15.71 & 15.34 & 14.85 & 14.11 & 13.24\\
$qis_{be}$ & \em{20.54} & \em{19.98} & \em{19.60} & \em{19.11} & \em{18.61} & \em{18.15} & \em{16.60} & \em{15.35} & \em{15.12} & \em{14.76} & \em{14.50} & \em{14.21} & \em{13.86} & \em{13.49} & \em{12.85} & \em{12.00}\\
\addlinespace[0.3em]
\multicolumn{17}{l}{\textbf{Linear-Inverse Shrinkage \citep{ledoit2022quadratic}}}\\
$lis$ & 22.81 & 22.12 & 21.63 & 21.05 & 20.45 & 19.93 & 18.21 & 16.75 & 16.67 & 16.24 & 15.91 & 15.56 & 15.14 & 14.70 & 13.91 & 13.01\\
$lis_{bn}$ & 22.44 & 21.69 & 21.22 & 20.66 & 20.07 & 19.49 & 17.76 & 16.51 & 16.89 & 16.36 & 16.02 & 15.71 & 15.34 & 14.85 & 14.11 & 13.24\\
$lis_{be}$ & 20.54 & 19.98 & 19.60 & 19.11 & 18.62 & 18.15 & 16.60 & 15.35 & 15.12 & 14.76 & 14.50 & 14.21 & 13.86 & 13.49 & 12.85 & 12.00\\
\addlinespace[0.3em]
\multicolumn{17}{l}{\textbf{Geometric-Inverse Shrinkage \citep{ledoit2022quadratic}}}\\
$gis$ & 22.82 & 22.14 & 21.64 & 21.06 & 20.46 & 19.94 & 18.22 & 16.75 & 16.68 & 16.24 & 15.92 & 15.57 & 15.14 & 14.71 & 13.91 & 13.01\\
$gis_{bn}$ & 22.44 & 21.69 & 21.22 & 20.66 & 20.07 & 19.49 & 17.76 & 16.51 & 16.89 & 16.36 & 16.02 & 15.71 & 15.34 & 14.85 & 14.11 & 13.24\\
$gis_{be}$ & 20.54 & 19.98 & 19.60 & 19.11 & 18.62 & 18.15 & 16.60 & 15.35 & 15.12 & 14.76 & 14.50 & 14.21 & 13.86 & 13.49 & 12.85 & 12.00\\
\bottomrule
\end{tabular}

%% file: tables/tab_split_shrbe.tex
\begin{tabular}[t]{>{}c|ccccccc>{}c|cccccccc}
\toprule
\multicolumn{1}{c}{\textbf{ }} & \multicolumn{8}{c}{\textbf{Germany (EPEX-DE)}} & \multicolumn{8}{c}{\textbf{Spain (OMIE)}} \\
\cmidrule(l{3pt}r{3pt}){2-9} \cmidrule(l{3pt}r{3pt}){10-17}
\textbf{Excl. bf} & 1H & 2H & 3H & 4H & 6H & 8H & 12H & 24H & 1H & 2H & 3H & 4H & 6H & 8H & 12H & 24H\\
\midrule
\addlinespace[0.3em]
\multicolumn{17}{l}{\textbf{Mean Absolute Error}}\\
ARX & 20.57 & 19.98 & 19.59 & 19.10 & 18.59 & 18.11 & 16.56 & 15.29 & 15.14 & 14.77 & \em{14.50} & 14.20 & \em{13.85} & 13.47 & 12.84 & 11.98\\
NARX & 20.81 & 20.24 & 19.85 & 19.35 & 18.85 & 18.39 & 16.82 & 15.54 & 15.20 & 14.84 & 14.57 & 14.27 & 13.94 & 13.55 & 12.89 & 12.05\\
XGB & \textbf{20.36} & \textbf{19.78} & \textbf{19.37} & \textbf{18.88} & \textbf{18.33} & \textbf{17.81} & \textbf{16.24} & \textbf{14.94} & \em{15.11} & \em{14.74} & 14.50 & \em{14.20} & 13.86 & \em{13.47} & \em{12.80} & \em{11.96}\\
MITRA & 21.13 & 20.59 & 20.20 & 19.70 & 19.20 & 18.70 & 17.18 & 15.88 & 15.28 & 14.93 & 14.66 & 14.37 & 14.02 & 13.65 & 13.00 & 12.14\\
None & \em{20.49} & \em{19.94} & \em{19.54} & \em{19.06} & \em{18.56} & \em{18.08} & \em{16.56} & \em{15.28} & \textbf{15.04} & \textbf{14.68} & \textbf{14.42} & \textbf{14.13} & \textbf{13.78} & \textbf{13.41} & \textbf{12.77} & \textbf{11.92}\\
\addlinespace[0.3em]
\multicolumn{17}{l}{\textbf{Root Mean Square Error}}\\
ARX & 32.29 & 31.21 & 30.43 & 29.61 & 28.42 & 27.73 & 25.09 & 23.17 & 22.48 & 21.97 & 21.57 & 21.20 & 20.69 & 20.08 & 19.03 & 17.81\\
NARX & 32.63 & 31.58 & 30.82 & 29.98 & 28.84 & 28.12 & 25.51 & 23.57 & 22.62 & 22.12 & 21.72 & 21.36 & 20.85 & 20.24 & 19.19 & 17.94\\
XGB & \textbf{31.72} & \textbf{30.64} & \textbf{29.84} & \textbf{28.98} & \textbf{27.76} & \textbf{27.02} & \textbf{24.35} & \textbf{22.29} & \textbf{22.31} & \textbf{21.80} & \textbf{21.40} & \textbf{21.04} & \textbf{20.52} & \textbf{19.92} & \textbf{18.84} & \textbf{17.63}\\
MITRA & 33.15 & 32.12 & 31.38 & 30.54 & 29.40 & 28.66 & 26.11 & 24.20 & 22.66 & 22.17 & 21.77 & 21.40 & 20.90 & 20.29 & 19.22 & 17.96\\
None & \em{32.14} & \em{31.10} & \em{30.33} & \em{29.51} & \em{28.34} & \em{27.64} & \em{25.04} & \em{23.12} & \em{22.36} & \em{21.86} & \em{21.47} & \em{21.11} & \em{20.60} & \em{20.01} & \em{18.95} & \em{17.73}\\
\bottomrule
\end{tabular}

%% file: tables/tab_seq_mae.tex
\begin{tabular}[t]{>{}c|ccccccc>{}c|cccccccc}
\toprule
\multicolumn{1}{c}{\textbf{ }} & \multicolumn{8}{c}{\textbf{Germany (EPEX-DE)}} & \multicolumn{8}{c}{\textbf{Spain (OMIE)}} \\
\cmidrule(l{3pt}r{3pt}){2-9} \cmidrule(l{3pt}r{3pt}){10-17}
\textbf{App.} & 1H & 2H & 3H & 4H & 6H & 8H & 12H & 24H & 1H & 2H & 3H & 4H & 6H & 8H & 12H & 24H\\
\midrule
\addlinespace[0.3em]
\multicolumn{17}{l}{\textbf{Forecast combination and sequential approaches}}\\
$sa^\ast$ & 21.87 & 21.39 & 20.93 & 20.55 & 20.04 & 19.61 & 18.21 & 16.95 & 15.41 & 15.08 & 14.83 & 14.55 & 14.23 & 13.90 & 13.14 & 12.27\\
$sa+wls$ & 21.85 & 21.33 & 20.92 & 20.48 & 19.95 & 19.42 & 17.96 & 16.62 & 15.39 & 15.04 & 14.79 & 14.51 & 14.18 & 13.81 & 13.11 & 12.31\\
$sa+shr$ & 21.13 & 20.60 & 20.21 & 19.72 & 19.27 & 18.73 & 17.26 & 15.91 & \em{15.13} & \em{14.78} & \em{14.54} & \em{14.25} & \em{13.91} & \em{13.56} & \em{12.87} & \em{12.02}\\
\addlinespace[0.3em]
\multicolumn{17}{l}{\textbf{Best reconciliation approach (MITRA base forecasts)}}\\
$shr$ & \em{21.08} & \em{20.34} & \em{19.86} & \em{19.34} & \em{18.71} & \em{18.18} & \textbf{16.50} & \textbf{15.15} & 15.88 & 15.43 & 15.14 & 14.84 & 14.44 & 14.06 & 13.34 & 12.43\\
\addlinespace[0.3em]
\multicolumn{17}{l}{\textbf{Best temporal coherent combination approach}}\\
$shr_{be}$ & \textbf{20.49} & \textbf{19.94} & \textbf{19.54} & \textbf{19.06} & \textbf{18.56} & \textbf{18.08} & \em{16.56} & \em{15.28} & \textbf{15.04} & \textbf{14.68} & \textbf{14.42} & \textbf{14.13} & \textbf{13.78} & \textbf{13.41} & \textbf{12.77} & \textbf{11.92}\\
\bottomrule
\end{tabular}

%% file: tables/tab_covariance_rmse.tex
\begin{tabular}[t]{>{}c|ccccccc>{}c|cccccccc}
\toprule
\multicolumn{1}{c}{\textbf{ }} & \multicolumn{8}{c}{\textbf{Germany (EPEX-DE)}} & \multicolumn{8}{c}{\textbf{Spain (OMIE)}} \\
\cmidrule(l{3pt}r{3pt}){2-9} \cmidrule(l{3pt}r{3pt}){10-17}
\textbf{App.} & 1H & 2H & 3H & 4H & 6H & 8H & 12H & 24H & 1H & 2H & 3H & 4H & 6H & 8H & 12H & 24H\\
\midrule
\addlinespace[0.3em]
\multicolumn{17}{l}{\textbf{ARX forecasts}}\\
$base$ & 39.83 & 38.95 & 38.28 & 37.50 & 36.42 & 35.59 & 32.68 & 30.28 & 24.40 & 23.96 & 23.59 & 23.25 & 22.77 & 22.12 & 20.83 & 19.50\\
$wls$ & 39.82 & 38.93 & 38.24 & 37.41 & 36.30 & 35.44 & 32.72 & 30.12 & 24.43 & 23.96 & 23.57 & 23.21 & 22.67 & 22.01 & 20.75 & 19.34\\
$shr$ & 38.56 & 37.66 & 36.98 & 36.13 & 35.12 & 34.15 & 31.64 & 29.15 & 24.28 & 23.82 & 23.43 & 23.07 & 22.55 & 21.90 & 20.68 & 19.31\\
\addlinespace[0.3em]
\multicolumn{17}{l}{\textbf{NARX forecasts}}\\
$base$ & 35.62 & 34.84 & 33.89 & 33.27 & 32.00 & 31.20 & 28.95 & 27.24 & 24.07 & 23.71 & 23.44 & 22.96 & 22.41 & 21.92 & 20.64 & 19.14\\
$wls$ & 35.55 & 34.36 & 33.55 & 32.68 & 31.45 & 30.66 & 28.03 & 26.13 & 24.03 & 23.39 & 22.96 & 22.58 & 22.03 & 21.45 & 20.35 & 19.10\\
$shr$ & 34.56 & 33.37 & 32.55 & 31.67 & 30.42 & 29.61 & 26.94 & 24.90 & 24.02 & 23.39 & 22.96 & 22.57 & 22.02 & 21.42 & 20.28 & 19.01\\
\addlinespace[0.3em]
\multicolumn{17}{l}{\textbf{XGB forecasts}}\\
$base$ & 37.05 & 35.87 & 35.10 & 34.24 & 33.38 & 32.94 & 30.55 & 29.50 & 24.57 & 24.11 & 23.75 & 23.60 & 22.93 & 22.50 & 21.24 & 19.61\\
$wls$ & 36.82 & 35.73 & 34.97 & 34.13 & 33.00 & 32.25 & 29.80 & 27.93 & 24.53 & 23.96 & 23.56 & 23.17 & 22.66 & 22.04 & 20.99 & 19.76\\
$shr$ & 35.13 & 33.99 & 33.20 & 32.29 & 31.15 & 30.34 & 27.61 & 25.60 & 23.70 & 23.13 & 22.71 & 22.29 & 21.77 & 21.12 & 20.01 & 18.66\\
\addlinespace[0.3em]
\multicolumn{17}{l}{\textbf{MITRA forecasts}}\\
$base$ & 33.31 & 32.43 & 31.87 & 30.86 & 29.73 & 29.34 & 26.27 & 24.81 & 23.61 & 23.17 & 22.72 & 22.19 & 21.87 & 21.45 & 20.16 & 18.84\\
$wls$ & 33.38 & 32.01 & 31.09 & 30.16 & 28.84 & 28.07 & \em{25.04} & \em{22.92} & 23.54 & 22.84 & 22.38 & 21.99 & 21.39 & 20.74 & 19.63 & 18.34\\
$shr$ & \em{32.84} & \em{31.52} & \em{30.61} & \em{29.70} & \em{28.38} & \textbf{27.63} & \textbf{24.69} & \textbf{22.56} & 23.25 & 22.59 & 22.14 & 21.75 & 21.18 & 20.53 & 19.44 & 18.17\\
\addlinespace[0.3em]
\multicolumn{17}{l}{\textbf{Temporal coherent combination}}\\
$wls$ & 33.09 & 32.03 & 31.27 & 30.45 & 29.27 & 28.59 & 25.98 & 24.09 & \em{22.74} & \em{22.24} & \em{21.85} & \em{21.49} & \em{20.98} & \em{20.39} & \em{19.34} & \em{18.11}\\
$shr$ & 34.82 & 33.73 & 32.92 & 32.06 & 30.83 & 30.12 & 27.38 & 25.39 & 23.80 & 23.26 & 22.83 & 22.41 & 21.86 & 21.19 & 20.03 & 18.73\\
$shr_{bn}$ & 34.74 & 33.42 & 32.57 & 31.65 & 30.38 & 29.63 & 26.76 & 24.82 & 24.71 & 24.01 & 23.56 & 23.14 & 22.58 & 21.91 & 20.82 & 19.57\\
$shr_{be}$ & \textbf{32.14} & \textbf{31.10} & \textbf{30.33} & \textbf{29.51} & \textbf{28.34} & \em{27.64} & 25.04 & 23.12 & \textbf{22.36} & \textbf{21.86} & \textbf{21.47} & \textbf{21.11} & \textbf{20.60} & \textbf{20.01} & \textbf{18.95} & \textbf{17.73}\\
\bottomrule
\end{tabular}

%% file: tables/tab_shr_rmse.tex
\begin{tabular}[t]{>{}c|ccccccc>{}c|cccccccc}
\toprule
\multicolumn{1}{c}{\textbf{ }} & \multicolumn{8}{c}{\textbf{Germany (EPEX-DE)}} & \multicolumn{8}{c}{\textbf{Spain (OMIE)}} \\
\cmidrule(l{3pt}r{3pt}){2-9} \cmidrule(l{3pt}r{3pt}){10-17}
\textbf{App.} & 1H & 2H & 3H & 4H & 6H & 8H & 12H & 24H & 1H & 2H & 3H & 4H & 6H & 8H & 12H & 24H\\
\midrule
\addlinespace[0.3em]
\multicolumn{17}{l}{\textbf{Linear Shrinkage \citep{ledoit2004, Schafer2005-jh}}}\\
$shr$ & 34.82 & 33.73 & 32.92 & 32.06 & 30.83 & 30.12 & 27.38 & 25.39 & 23.80 & 23.26 & 22.83 & 22.41 & 21.86 & 21.19 & 20.03 & 18.73\\
$shr_{bn}$ & 34.74 & 33.42 & 32.57 & 31.65 & 30.38 & 29.63 & 26.76 & 24.82 & 24.71 & 24.01 & 23.56 & 23.14 & 22.58 & 21.91 & 20.82 & 19.57\\
$shr_{be}$ & \textbf{32.14} & \textbf{31.10} & \textbf{30.33} & \textbf{29.51} & \textbf{28.34} & \textbf{27.64} & \textbf{25.04} & \textbf{23.12} & \textbf{22.36} & \textbf{21.86} & \textbf{21.47} & \textbf{21.11} & \textbf{20.60} & \textbf{20.01} & \textbf{18.95} & \textbf{17.73}\\
\addlinespace[0.3em]
\multicolumn{17}{l}{\textbf{Quadratic-Inverse Shrinkage \citep{ledoit2022quadratic}}}\\
$qis$ & 35.90 & 34.71 & 33.84 & 32.92 & 31.61 & 30.85 & 28.02 & 25.88 & 24.64 & 24.02 & 23.55 & 23.06 & 22.51 & 21.74 & 20.49 & 19.13\\
$qis_{bn}$ & 34.95 & 33.61 & 32.76 & 31.82 & 30.55 & 29.79 & 26.91 & 24.96 & 24.98 & 24.25 & 23.78 & 23.36 & 22.80 & 22.11 & 21.01 & 19.76\\
$qis_{be}$ & \em{32.26} & \em{31.22} & \em{30.45} & \em{29.63} & \em{28.46} & \em{27.74} & \em{25.14} & \em{23.23} & \em{22.49} & \em{22.00} & \em{21.60} & \em{21.23} & \em{20.74} & \em{20.12} & \em{19.07} & \em{17.85}\\
\addlinespace[0.3em]
\multicolumn{17}{l}{\textbf{Linear-Inverse Shrinkage \citep{ledoit2022quadratic}}}\\
$lis$ & 35.85 & 34.67 & 33.80 & 32.89 & 31.58 & 30.82 & 28.00 & 25.86 & 24.62 & 24.00 & 23.53 & 23.04 & 22.48 & 21.73 & 20.47 & 19.12\\
$lis_{bn}$ & 34.95 & 33.61 & 32.76 & 31.82 & 30.55 & 29.79 & 26.91 & 24.96 & 24.98 & 24.25 & 23.78 & 23.36 & 22.80 & 22.11 & 21.01 & 19.76\\
$lis_{be}$ & 32.26 & 31.22 & 30.45 & 29.63 & 28.46 & 27.75 & 25.14 & 23.23 & 22.50 & 22.00 & 21.61 & 21.23 & 20.74 & 20.12 & 19.07 & 17.86\\
\addlinespace[0.3em]
\multicolumn{17}{l}{\textbf{Geometric-Inverse Shrinkage \citep{ledoit2022quadratic}}}\\
$gis$ & 35.88 & 34.69 & 33.82 & 32.90 & 31.59 & 30.83 & 28.01 & 25.87 & 24.63 & 24.01 & 23.54 & 23.05 & 22.49 & 21.73 & 20.48 & 19.13\\
$gis_{bn}$ & 34.95 & 33.61 & 32.76 & 31.82 & 30.55 & 29.79 & 26.91 & 24.96 & 24.98 & 24.25 & 23.78 & 23.36 & 22.80 & 22.11 & 21.01 & 19.76\\
$gis_{be}$ & 32.26 & 31.22 & 30.45 & 29.63 & 28.46 & 27.74 & 25.14 & 23.23 & 22.49 & 22.00 & 21.60 & 21.23 & 20.74 & 20.12 & 19.07 & 17.86\\
\bottomrule
\end{tabular}

%% file: tables/tab_seq_rmse.tex
\begin{tabular}[t]{>{}c|ccccccc>{}c|cccccccc}
\toprule
\multicolumn{1}{c}{\textbf{ }} & \multicolumn{8}{c}{\textbf{Germany (EPEX-DE)}} & \multicolumn{8}{c}{\textbf{Spain (OMIE)}} \\
\cmidrule(l{3pt}r{3pt}){2-9} \cmidrule(l{3pt}r{3pt}){10-17}
\textbf{App.} & 1H & 2H & 3H & 4H & 6H & 8H & 12H & 24H & 1H & 2H & 3H & 4H & 6H & 8H & 12H & 24H\\
\midrule
\addlinespace[0.3em]
\multicolumn{17}{l}{\textbf{Forecast combination and sequential approaches}}\\
$sa^\ast$ & 33.65 & 32.68 & 31.94 & 31.13 & 30.00 & 29.42 & 26.85 & 25.20 & 22.73 & 22.28 & 21.92 & 21.53 & 21.05 & 20.51 & 19.36 & 17.94\\
$sa+wls$ & 33.60 & 32.60 & 31.86 & 31.06 & 29.91 & 29.22 & 26.67 & 24.69 & 22.71 & 22.22 & 21.83 & 21.48 & 20.97 & 20.39 & 19.31 & 18.06\\
$sa+shr$ & \em{32.55} & 31.52 & 30.76 & 29.94 & 28.77 & 28.07 & 25.49 & 23.48 & \textbf{22.25} & \textbf{21.75} & \textbf{21.35} & \textbf{21.00} & \textbf{20.48} & \textbf{19.88} & \textbf{18.79} & \textbf{17.51}\\
\addlinespace[0.3em]
\multicolumn{17}{l}{\textbf{Best reconciliation approach (MITRA base forecasts)}}\\
$shr$ & 32.84 & \em{31.52} & \em{30.61} & \em{29.70} & \em{28.38} & \textbf{27.63} & \textbf{24.69} & \textbf{22.56} & 23.25 & 22.59 & 22.14 & 21.75 & 21.18 & 20.53 & 19.44 & 18.17\\
\addlinespace[0.3em]
\multicolumn{17}{l}{\textbf{Best temporal coherent combination approach}}\\
$shr_{be}$ & \textbf{32.14} & \textbf{31.10} & \textbf{30.33} & \textbf{29.51} & \textbf{28.34} & \em{27.64} & \em{25.04} & \em{23.12} & \em{22.36} & \em{21.86} & \em{21.47} & \em{21.11} & \em{20.60} & \em{20.01} & \em{18.95} & \em{17.73}\\
\bottomrule
\end{tabular}